\documentclass[sigconf,nonacm]{acmart}
\settopmatter{printacmref=false}

\AtBeginDocument{%
  \providecommand\BibTeX{{%
    Bib\TeX}}}

\usepackage{amsmath,amssymb,amsfonts}
\usepackage{algorithmic}
\usepackage{graphicx}
\usepackage{textcomp}
\usepackage{xcolor}
\usepackage[T1]{fontenc} 
\usepackage[utf8]{inputenc} 

\def\BibTeX{{\rm B\kern-.05em{\sc i\kern-.025em b}\kern-.08em

    T\kern-.1667em\lower.7ex\hbox{E}\kern-.125emX}}
\newtheorem{theorem}{Theorem}
\newtheorem{proposition}{Proposition}

\newtheorem{definition}{Definition}

\begin{document}

\title{
Differential Privacy Preserving Distributed Quantum Computing
}

\author{Hui Zhong}
\email{hzhong5@uh.edu}
\affiliation{
  \institution{Department of Electrical and Computer Engineering, University of Houston}
  \city{Houston}
  \country{USA}
}

\author{Keyi Ju}
\email{juju@bupt.edu.cn}
\affiliation{
  \institution{State Key Laboratory of Networking and Switching Technology, Beijing University of Posts and Telecommunications}
  \city{Beijing}
  \country{China}
}

\author{Jiachen Shen}
\email{jshen28@CougarNet.UH.EDU}
\affiliation{
  \institution{Department of Electrical and Computer Engineering, University of Houston}
  \city{Houston}
  \country{USA}
}

\author{Xinyue Zhang}
\email{xzhang48@kennesaw.edu}
\affiliation{
  \institution{Department of Computer Science, Kennesaw State University}
  \city{Marietta}
  \country{USA}
}

\author{Xiaoqi Qin}
\email{xiaoqiqin@bupt.edu.cn}
\affiliation{
  \institution{State Key Laboratory of Networking and Switching Technology, Beijing University of Posts and Telecommunications}
  \city{Beijing}
  \country{China}
}

\author{Tomoaki Ohtsuki}
\email{ohtsuki@ics.keio.ac.jp}
\affiliation{
  \institution{Department of Information and Computer Science, Keio University}
  \city{Tokyo}
  \country{Japan}
}

\author{Miao Pan}
\email{mpan2@uh.edu}
\affiliation{
  \institution{Department of Electrical and Computer Engineering, University of Houston}
  \city{Houston}
  \country{USA}
}

\author{Zhu Han}
\email{zhan2@Central.UH.EDU}
\affiliation{
  \institution{Department of Electrical and Computer Engineering, University of Houston}
  \city{Houston}
  \country{USA}
}


\begin{abstract}

Existing quantum computers can only operate with hundreds of qubits in the Noisy Intermediate-Scale Quantum (NISQ) state, while quantum distributed computing (QDC) is regarded as a reliable way to address this limitation, allowing quantum computers to achieve their full computational potential. However, similar to classical distributed computing, QDC also faces the problem of privacy leakage. Existing research has introduced quantum differential privacy (QDP) for privacy protection in central quantum computing, but there is no dedicated privacy protection mechanisms for QDC. To fill this research gap, our paper introduces a novel concept called quantum Rényi differential privacy (QRDP), which incorporates the advantages of classical Rényi DP and is applicable in the QDC domain. Based on the new quantum Rényi divergence, QRDP provides delicate and flexible privacy protection by introducing parameter $\alpha$. In particular, the QRDP composition is well suited for QDC, since it allows for more precise control of the total privacy budget in scenarios requiring multiple quantum operations. We analyze a variety of noise mechanisms that can implement QRDP, and derive the lowest privacy budget provided by these mechanisms. Finally, we investigate the impact of different quantum parameters on QRDP. Through our simulations, we also find that adding noise will make the data less usable, but increase the level of privacy protection.
\end{abstract}


\keywords{Quantum distributed computing, Quantum noises, Quantum differential privacy, Rényi differential Privacy.}


\maketitle

\section{Introduction}
Quantum computing is an emerging field that can process problems in parallel due to quantum properties. The acceleration capability of quantum computers makes them suitable for a variety of fields, such as cryptography~\cite{aumasson2017impact}, molecular simulation~\cite{grimsley2019adaptive}, and combinatorial optimization~\cite{han2000genetic}. In order to fully realize the capabilities of a quantum computer, thousands or even tens of thousands of qubits are usually required to be connected. However, since existing quantum computers are still in the Noisy Intermediate-Scale Quantum (NISQ) era, the available qubits are in the order of hundreds, which limits the possible applications of quantum computers~\cite{tannu2019not}. In this context, it is widely accepted in academia and industry to use distributed quantum computing (DQC) to enable larger-scale quantum computing, i.e., to increase the number of available qubits by bringing together modular and small quantum chips through a quantum network infrastructure~\cite{caleffi2018quantum, denchev2008distributed, barral2024review}. State-of-the-art works have shown that DQC enables multiple quantum processors to work together to solve problems at exponential speeds beyond the processing power of a single quantum computer~\cite{elkin2014can,caleffi2024distributed,barral2024review}.

However, similar to classical distributed systems, DQC faces several data-related issues. While quantum key distribution (QKD) has provided effective means to address the transmission channel's security between QPUs (a specialized processor for performing quantum computations)~\cite{sharma2019survey}, DQC faces the same data privacy issues as quantum computing. Quantum computing is exposed to internal or external attacks such as higher-energy state attacks and quantum-side-channel attacks~\cite{guan2024optimal}, leading to potential privacy breaches. In addition, quantum computing systems are vulnerable to membership inference attacks~\cite{watkins2023quantum} and adversarial attacks~\cite{du2022quantum, aaronson2019gentle}. Therefore, ensuring the protection of data privacy in DQC is essential and critical. In the sequel, we now begin by introducing privacy-preserving methods for classical central and distributed computing. Following that, we discuss existing privacy-preserving techniques for quantum centralized computing and highlight the current gaps in privacy protection methods for QDC.

Differential privacy (DP) has emerged as a promising approach to protect data privacy in traditional centralized computing~\cite{dwork2008differential,qiao2019effective,dankar2013practicing}. DP protects data privacy by adding random noise to the model so that the inclusion or exclusion of individual data does not significantly affect the query result~\cite{dwork2006differential}. Compared to other approaches, DP provides end-to-end data protection and offers a rigorous mathematical formulation, which is often used to measure the degree of privacy protection in terms of $\epsilon$. However, traditional DP still faces challenges when applied to distributed computing environments~\cite{mironov2017renyi}. In distributed computing, the combination theorem of DP sums the privacy loss of multiple operations to obtain the total privacy loss. However, as the number of operations increases, this linear superposition rapidly increases privacy loss, resulting in a combinatorial explosion. Therefore, researchers proposed Rényi differential privacy (RDP) to provide a more flexible privacy computation method that is more suitable for distributed computing scenarios~\cite{mironov2017renyi}. RDP extends the classical DP using Rényi divergence, which introduces Rényi parameter $\alpha$ for finer control of the privacy budget. Rényi divergence is a parametric measure of divergence between probability distributions. RDP can also control the total privacy loss through $\alpha$ to avoid explosion problems. These properties make RDP a compelling choice for enhancing privacy protection in distributed computing scenarios~\cite{mironov2017renyi}.

Given the strengths of DP, it has been further extended to quantum differential privacy (QDP), which has become a promising method to protect data privacy in quantum computing~\cite{zhao2024bridging,ju2024quantum,li2024differential,ju2024privacy,ju2024harnessing,zhong2023tuning}. Zhou et al. in~\cite{zhou2017differential} first introduced the concept of QDP and derived the upper bounds of privacy budgets in the case of three quantum noise mechanisms. As with classical DP, they prove that QDP also satisfies a series of composition theorems. Guan et al. in their CCS publication~\cite{guan2023detecting} introduced the first calculation of the lower bound on the quantum privacy loss. This paper mentions that by focusing on the maximum and minimum eigenvalues of a pre-prepared matrix, the complex problem of verifying whether a quantum algorithm satisfies QDP can be reduced to a computational problem without the need to keep track of the evolution of quantum states.

Although there are several pioneering research efforts for QDP in quantum centralized computing, the privacy preservation study for QDC is still infancy. When it comes to QDC, applying QDP to QDC has some problems as in classical distributed computing, such as privacy loss explosion. Therefore, we would like to employ RDP to QDC to mitigate these issues. However, in the literature, there are no privacy-preserving approaches similar to classical RDP for QDC due to multiple research challenges. For example, how to redefine the appropriate Rényi divergence as well as the privacy budget to satisfy quantum properties. Besides, in order to achieve RDP in the quantum environment, multiple quantum noise mechanisms need to be utilized, which involves investigating the impacts of different quantum noises on the privacy budget. Another challenge is how to accurately calculate and measure the total privacy budget after multiple quantum operations such as DQC. That calls for the novel design of RDP algorithms that are tailored for the quantum environment.

In this paper, we propose quantum Rényi differential privacy (QRDP), which is dedicated to privacy preservation in QDC. To the best of our knowledge, this is the first paper to extend classical RDP to the quantum computing domain. This new privacy preserving framework not only takes advantage of the classical Rényi divergence, but also adapts these principles to the unique characteristics of quantum computing, thus providing good and robust privacy guarantees in the quantum computing setting. Our salient contributions are summarized as follows:
\begin{itemize}
    \item We introduce Quantum Rényi Differential Privacy (QRDP), establishing a novel privacy protection framework tailored for quantum distributed computing.
    
    \item We analyze various quantum noise mechanisms for implementing QRDP, including generalized amplitude damping, composition of phase and amplitude damping, and depolarizing mechanisms. We further propose how to accurately measure the level of privacy protection offered by these mechanisms.
    
    \item We conduct numerical simulations to demonstrate the relationship between QRDP and quantum noise parameters, highlighting the enhanced privacy guarantees offered by QRDP in quantum computing systems.

\end{itemize}

The rest of this paper is organised as follows. In Section~\ref{sec:preliminary}, we introduce quantum computing, quantum distributed learning, existing concepts of QDP. In Section~\ref{sec:QRDP}, we present QRDP and its combination theorem. In Section~\ref{sec:mechanism}, we analyze different noise mechanisms for implementing QRDP and derive formulas for measuring specific privacy budgets. In Section~\ref{sec:QRDPqdp}, we present a QRDP to QDP transformation that is easy to manage the effectiveness of the privacy budget. Finally, we discuss future work in Section~\ref{sec:discussion} and give conclusions in Section~\ref{sec:conclusion}.

\section{Preliminaries\label{sec:preliminary}}
In this section, first, we review the basics of quantum computing, including quantum states, quantum gate operations, and quantum measurements. Next, we introduce both classical DP and QDP for protecting data privacy in centralized computers. Then, we discuss the basic architecture of QDC, which addresses the limitation of quantum centralized computers. Finally, we explore existing RDP frameworks in classical distributed computers, which inspire further consideration of data privacy methods in QDC.


\subsection{Quantum computing}
In digital quantum computing, quantum circuits implement quantum algorithms, including quantum states, quantum gate operations and quantum measurements~\cite{guan2023detecting}. Firstly, we introduce the concept of quantum state, which includes pure and mixed quantum states. The basic unit used in quantum computing is the qubit, which includes the standard orthogonal basis in Hilbert space $\left | 0  \right \rangle$  and $\left | 1  \right \rangle$, which correspond to bits 0 and 1 in classical computation, respectively. In a two-dimensional Hilbert space, a pure quantum state can represent two single qubits $\left | 0  \right \rangle = (1,0)^{T}\ $ and $\left | 1  \right \rangle = (0,1)^{T}$ simultaneously, i.e., $\left | \psi  \right \rangle = \alpha_{1}  \left | 0  \right \rangle + \alpha_{2} \left | 1 \right \rangle = (\alpha_{1},\alpha_{2} )^{T}\in \mathbb{C} ^{2} $, where the complex numbers $\alpha_{1}$ and $\alpha_{2}$ satisfy the formula $\left | \alpha_{1}  \right | ^{2} + \left | \alpha_{2} \right | ^{2} = 1$. In a $2^{n}$-dimensional Hilbert space ($n$ is the number of qubits), pure quantum states can also be represented as similar column vectors, i.e., $\left | \psi  \right \rangle = \sum_{i=1}^{2^{n}}\alpha _{i}\left | i  \right \rangle  \in \mathbb{C} ^{2^{n}}$, which satisfies $\left | \alpha_{1}  \right | ^{2} +...+ \left | \alpha_{2^{n}} \right | ^{2} = 1$. In practice, quantum systems interact with their external environment, so a pure state may collapse into a mixed state, which can be represented by a $2^{n}\times 2^{n}$ positive semi-definite matrix, $\rho =\sum_{i=1}^{2^{n}} p_{i} \left | \psi  \right \rangle \left\langle\psi\right| $. The meaning is that the state is in $\left | \psi  \right \rangle$ with probability $p_{i}$. $\rho$ also satisfies $\text{Tr}\left ( \rho  \right ) =1$. $\text{Tr}(\cdot)$ of $\rho$ is defined as the sum of the diagonal elements of the $\rho$ matrix. In addition, we usually use $\tau(\rho,\sigma)$ to denote the difference between two quantum states $\rho$ and $\sigma$. $\tau$ is also called the trace distance and can be calculated as $\text{Tr}\left ( \left | \rho -\sigma\right |  \right ) /2$.

A quantum circuit consists of the product of a number of sequences of different types of quantum gates $U$. Each $U$ sequence can be mathematically modeled as a matrix. For example, $X$, $Y$, and $Z$ gates are single qubit gates that can be represented as a $2\times 2$ unitary matrix, and $CNOT$ is a two qubit gate that can be represented as a $4\times 4$ unitary matrix. For the input quantum state $\rho$, after the quantum gate operation $U$, its quantum state will become $\rho'=U\rho U^{\dagger}$, which satisfies $U^{\dagger} U = U U^{\dagger} = I$. $U^{\dagger}$ is the adjoint of $U$ and $I$ is the identity operator. 

After going through the quantum circuit, we need to use quantum measurements to get the final calculation results. These results are probability distributions of possible outcomes of quantum measurements. A quantum measurement can be modeled as a set of positive semi-definite matrices $\left \{ M_{k} \right \} _{k\in O}$ on a Hilbert space, and satisfies $ {\textstyle \sum_{k}} M_{k}=I$, where $O$ is the set of measurement results. If we make a measurement of a quantum state $\rho$, there is a probability of $p_{k}$ to get the result $k$, denoted as $p_{k} =\text{Tr}\left \{ M_{k} \rho \right \} $. Such a measurement can also be referred to as Positive Operator-Valued Measure (POVM).

\subsection{Classical and quantum differential privacy for centralized computers}
Classical DP is a mature framework for protecting data privacy in centralized computers~\cite{dwork2006differential}. DP provides mathematically proven privacy protection, ensuring that individual data cannot be extrapolated backward. Additionally, it is possible to quantify privacy loss, striking a balance between data utility and data privacy.

\begin{definition}[\bf{Classical Differential Privacy}] A randomized function $K$ satisfies $(\epsilon,\delta)$-differential privacy if the data sets $D$ and $D^{'}$ differ by only one participant, and all set of outcomes $\mathcal{S} \subseteq Range(\mathcal{M})$ satisfy 
    \begin{equation}
		\rm{Pr}\left[\mathcal{K}(D)\in \mathcal{S}\right] \leq e^\epsilon \cdot \rm{Pr}\left[\mathcal{K}(D')\in \mathcal{S}\right]  + \delta,
    \end{equation}
where $\epsilon$ is the privacy budget and $\delta$ is the broken probability.
    \label{cdp}
\end{definition}

QDP is firstly proposed by Zhou et al.~\cite{zhou2017differential}, which is applicable to quantum centralized computing based on Eq.(\ref{cdp}). QDP incorporates quantum-specific features while retaining the advantages of the classical DP, and finally achieves the effect of protecting quantum data privacy.
\begin{definition}[\bf{Quantum Differential Privacy}] Given two quantum datasets $\rho$ and $\sigma$ with $\tau(\rho,\sigma)\leq d$, where $d \in (0,1]$ and $\tau(\cdot,\cdot)$ indicates the trace distance. A quantum operation $\mathcal{E} $ is $(\epsilon,\delta)$-differentially private if every $POVM$  (Positive Operator-Valued Measure) $M=\left \{ M_{m}  \right \} $ and all~$\mathcal{S} \subseteq Range(\mathcal{M})$ satisfy, 
 \begin{equation}
        \rm{Pr}\left[\mathcal{E} \left ( \rho \right ) \in _{M}  S\right] \leq e^\epsilon \cdot \rm{Pr}\left[\mathcal{E} \left ( \sigma \right ) \in _{M}  S\right] + \delta,
 \end{equation}
 where $\epsilon$ is the privacy budget and $\delta$ is the broken probability.
\label{qdp}
\end{definition}

Specifically, $\epsilon$ is the level of privacy protection. The smaller the $\epsilon$, the stronger the protection. $\delta$ is the probability that the model will obey  $\epsilon$-DP. The smaller the $\delta$, the better the performance.

\subsection{Quantum distributed computing}
Monolithic quantum computing refers to the execution of quantum algorithms on a single quantum processor. In the NISQ era, the computational process of a quantum computer generates inherent noise such as decoherence noise, shot noise, etc. This makes it necessary for quantum circuits to be fault-tolerant, i.e., a logical qubit is represented by multiple physical qubits to reduce error rates. Therefore, a quantum processor with only a few hundred qubits cannot meet the demands of quantum computing in a short period. However, academia and industry agree that a good solution to this problem is to use QDC~\cite{barral2024review}.

QDC is based on a network infrastructure that brings together small and medium-sized quantum chips, thereby increasing the number of available qubits and enabling exponential growth in the computing power of quantum computers. Recently, IBM introduced the Quantum Distributed System Two~\cite{gambe2023ibm,barral2024review}, which integrates three IBM Quantum Heron processors, each featuring 133 fixed-frequency qubits with tunable couplers. IBM claimed that their system offers a 3-5x performance increase over the 127-qubit Eagle processors.

There are two main aspects of scaling from local to distributed quantum processors: quantum partitioning and execution management~\cite{ferrari2023modular}. Quantum partitioning involves splitting a single quantum algorithm into multiple quantum processors. To make DQC efficient and economical, a quantum compiler needs to find the optimal decomposition and then perform intelligent remote scheduling to reduce the loss caused by EPR pairs. 
EPR pair is a pair of entangled quantum particles whose quantum states are interdependent, enabling correlated measurements regardless of the distance between them. To address this problem, Ferrari et al.~\cite{ferrari2023modular} proposed a general DQC quantum compilation framework and remote scheduling strategy based on how to rationally utilize the location of Telegate or Teledata for different types of circuits. Execution management refers to the partitioning of a given set of executed quantum circuits $P$ into non-overlapping subsets $P_{i}$ that satisfy $P=U_{i}P_{i}$. Each subset is assigned to a different QPU one by one. As shown in Fig.~\ref{fig:execution}, for each round i, there exists a scheduling function $S(i)$ that maps the quantum circuits to a quantum network. If DQC is supported, these quantum circuits will be split into subcircuits, each corresponding to a different QPU.

\begin{figure}[t]
\centering
\includegraphics[width=0.45\textwidth]{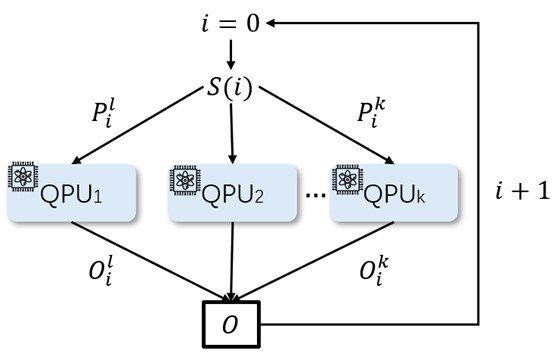}
\caption{Execution of multiple quantum circuit instances with $k$ QPUs.}\label{fig:execution}
\vspace{-4mm}
\end{figure}

\subsection{Classical Rényi differential privacy for classical distributed computers}
We know the privacy-preserving methods in classical and quantum centralized computing through the above. Now, we focus on the privacy-preserving techniques in classical and quantum distributed computing.

In classical distributed computers, RDP is a generalization of classical DP that defines the privacy budget through Rényi divergence, which refers to the difference between two probability distributions~\cite{mironov2017renyi}. RDP controls the privacy budget in a more finely way by introducing a new parameter $\alpha$. It is worth noting that RDP is more suitable for distributed computing, because the total privacy budget can be controlled by $\alpha$ to prevent the explosion of DP compositions. The Rényi divergence and $(\alpha , \epsilon)$-RDP are defined as follows.

\begin{definition}[\bf{Rényi Divergence}] For two probability distribution $P$ and $Q$ defined over $R$, the Rényi divergence of order $\alpha > 1$ is 
 \begin{equation}
        D_{\alpha } (P||Q)=\frac{1}{\alpha-1} logE_{x\sim Q}\left(\frac{P(x)}{Q(x)} \right)^{\alpha}.
 \end{equation}
\label{crenyidivergence}
\end{definition}

\begin{definition}[\bf{$(\alpha , \epsilon)$-RDP}] A randomized mechanism $f: D\mapsto R$ is said to have $\epsilon$-Rényi differential private of order $\alpha$, or $(\alpha , \epsilon)$-QRDP for short, if for any adjacent $D, D' \in \mathcal{D}$, it holds that
    \begin{equation}
		D_{\alpha } (f(D)||f(D'))\le \epsilon.
    \end{equation}
\label{crdp}
\end{definition}

However, when we turned on attention to privacy-preserving approaches in QDC, we found that no existing research has yet studied the quantum version of RDP. This gap prompted us to consider a critical question: is it possible to design a new RDP framework tailored specifically for QDC that can effectively protect data privacy? Such a framework needs to consider the unique properties of quantum systems, while still providing robust privacy guarantees similar to classical RDP.

\section{Differential Privacy Concept for Quantum Distributed Computing\label{sec:QRDP}}
Inspired by the generalization of DP to QDP for the quantum environment, we propose to develop quantum Rényi Differential Privacy (QRDP) based on RDP, which is particularly suitable for more complex scenarios such as QDC, to provide more flexible privacy protection for quantum computing. A key advantage of QRDP is its composition property, thus allowing flexibility in controlling and allocating privacy budgets. For instance, in the QDC architecture illustrated in Fig.~\ref{fig:execution}, we focus on each round and hope to obtain the privacy budget accumulated by multiple quantum processors working together through QRDP. In this section, we define the QRDP concept based on the new quantum Rényi divergence and derive QRDP basic theorems and properties, including post-processing and composition.

\subsection{Quantum Rényi differential privacy}

We formally define quantum Rényi divergence based on classical Rényi divergence. We need to change the two probability distributions $P$ and $Q$ in Eq.(\ref{crenyidivergence}) to correlate with the results of quantum computation. We find that if the quantum states are measured several times, the results are presented as a distribution. Therefore, we replace $P$ and $Q$ with the results after quantum measurements in quantum circuits. Our quantum Rényi divergence is defined as follows.

\begin{definition}[\bf{Quantum Rényi Divergence}] Given two quantum datasets $\rho$ and $\sigma$ with $\tau(\rho,\sigma)\leq d$, where $d \in (0,1]$ and $\tau(\cdot,\cdot)$ indicates the trace distance. $\mathcal{E}_{m} (\cdot )$ represents the result of a quantum state with a quantum operation $\mathcal{E}$ and being measured by $POVM$ $M={M_{m}}$. For two probability distribution $\mathcal{E}_{m} (\rho)$ and $\mathcal{E}_{m} (\sigma)$, the Rényi divergence of order $\alpha > 1$ is
    \begin{equation}
		D_{\alpha } (\mathcal{E}_{m} (\rho)||\mathcal{E}_{m} (\sigma))=\frac{1}{\alpha-1} logE_{\mathcal{E}_{m}(\sigma)}\left(\frac{\mathcal{E}_{m} (\rho)}{\mathcal{E}_{m} (\sigma)} \right)^{\alpha}.
    \end{equation}
    \label{qrdivergence}
\end{definition}

For the endpoints of interval $(1,\infty )$, quantum Rényi divergence is defined in terms of continuity. Specifically, $D_{1}(\mathcal{E}_{m}(\rho)||\mathcal{E}_{m} (\sigma))$ set to $\lim_{\alpha \to 1} D_{\alpha}(\mathcal{E}_{m}(\rho)||\mathcal{E}_{m} (\sigma))$, and can also be called the Kullback-Leibler divergence:
    \begin{equation}
		D_{1} (\mathcal{E}_{m} (\rho)||\mathcal{E}_{m} (\sigma))= logE_{\mathcal{E}_{m}(\sigma)}\left(\frac{\mathcal{E}_{m} (\rho)}{\mathcal{E}_{m} (\sigma)} \right)^{\alpha}.
    \end{equation}

The relation between quantum Rényi divergence and differential privacy when $\alpha =\infty $ is straightforward. A quantum mechanism satisfies $\epsilon$-QDP if and only if two results distribution of quantum states $\rho$ and $\sigma$ satisfies 
    \begin{equation}
		D_{\infty} (\mathcal{E}_{m} (\rho)||\mathcal{E}_{m} (\sigma))\le \epsilon.
    \end{equation}

Based on the new quantum Rényi divergence, we can now propose a definition of QRDP.

\begin{definition}[\bf{$(\alpha , \epsilon)$-Quantum Rényi Differential Privacy}] $\mathcal{E}_{m} (\cdot )$ is $\epsilon$-differential private of order $\alpha$, or $(\alpha , \epsilon)$-QRDP for short, if every $POVM$ $M=\left \{ M_{m}  \right \} $ and all~$\mathcal{S} \subseteq Range(\mathcal{M})$. For any adjacent $\rho$ and $\sigma$ such that $\tau(\rho,\sigma)\leq d$, it holds that
    \begin{equation}
		D_{\alpha } (\mathcal{E}_{m} (\rho)||\mathcal{E}_{m} (\sigma))\le \epsilon.
    \end{equation}
    \label{qrdp}
\end{definition}

According to Proposition 10 in~\cite{mironov2017renyi}, if $\mathcal{E}_{m} (\cdot )$ is $(\alpha , \epsilon)$-QRDP, we also get this formula:
    \begin{equation}
        \begin{split}
	   e^{-\epsilon} \Pr[\mathcal{E}_{m} (\sigma) \in S]^{\alpha / (\alpha - 1)} \leq \Pr[\mathcal{E}_{m} (\rho) \in S] &\\\leq \left(e^{\epsilon} \Pr[\mathcal{E}_{m} (\sigma) \in S]\right)^{(\alpha - 1)/\alpha}.\\
        \end{split}
    \end{equation}

\subsection{Post-processing and basic composition}

Similar to QDP, QRDP is also unaffected by post-processing. Without knowledge of the redundant information in the private dataset, any quantum operation performed redundantly on the quantum operation with $(\alpha, \epsilon)$-QRDP does not increase the privacy loss. Post-processing is defined as follows.

\begin{proposition}[\bf{Post-Processing}] Let $\mathcal{E}$ be a quantum operation that is $(\alpha , \epsilon)$-quantum Rényi differential privacy. Let $\mathcal{F}$ be an arbitrary quantum operation. Then the composition of $\mathcal{E}$ and $\mathcal{F}$:
    \begin{equation}
		\mathcal{F} \circ \mathcal{E} : \rho \mapsto \mathcal{F}(\mathcal{E}(\rho))
    \end{equation}
is $(\alpha , \epsilon)$-quantum Rényi differential privacy too.
\end{proposition}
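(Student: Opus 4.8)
The plan is to reduce the post-processing statement for QRDP to the data-processing inequality (monotonicity) for Rényi divergence, which is the natural quantum/classical analogue of the argument Mironov uses in Proposition 2 of~\cite{mironov2017renyi}. Concretely, fix any two adjacent quantum states $\rho$ and $\sigma$ with $\tau(\rho,\sigma)\le d$, and fix an arbitrary POVM $M=\{M_m\}$ on the output space of $\mathcal{F}\circ\mathcal{E}$. The key observation is that measuring the post-processed state $\mathcal{F}(\mathcal{E}(\rho))$ with $M$ is the same as measuring $\mathcal{E}(\rho)$ with the POVM $M' = \{\mathcal{F}^{\dagger}(M_m)\}$ obtained by pulling $M$ back through the (completely positive, trace-preserving) map $\mathcal{F}$, where $\mathcal{F}^{\dagger}$ is the adjoint channel; one checks that $\{\mathcal{F}^{\dagger}(M_m)\}$ is indeed a valid POVM since $\mathcal{F}^{\dagger}$ is positive and unital. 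Hence the output distribution $(\mathcal{F}\circ\mathcal{E})_m(\rho)$ equals $\mathcal{E}_{m'}(\rho)$ for this pulled-back measurement, and likewise for $\sigma$.

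First I would make this pullback identity precise: $\mathrm{Tr}\{M_m\,\mathcal{F}(\mathcal{E}(\rho))\} = \mathrm{Tr}\{\mathcal{F}^{\dagger}(M_m)\,\mathcal{E}(\rho)\}$ by definition of the adjoint, so the two probability distributions appearing in $D_\alpha\big((\mathcal{F}\circ\mathcal{E})_m(\rho)\,\|\,(\mathcal{F}\circ\mathcal{E})_m(\sigma)\big)$ are exactly the distributions $\mathcal{E}_{m'}(\rho)$ and $\mathcal{E}_{m'}(\sigma)$ from Definition~\ref{qrdp} applied with the POVM $M'$. Second, since $\mathcal{E}$ is assumed to be $(\alpha,\epsilon)$-QRDP, Definition~\ref{qrdp} guarantees that for \emph{every} POVM — in particular for $M'$ — we have $D_\alpha\big(\mathcal{E}_{m'}(\rho)\,\|\,\mathcal{E}_{m'}(\sigma)\big)\le\epsilon$. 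Combining the two steps yields $D_\alpha\big((\mathcal{F}\circ\mathcal{E})_m(\rho)\,\|\,(\mathcal{F}\circ\mathcal{E})_m(\sigma)\big)\le\epsilon$ for every POVM $M$ and every adjacent pair, which is precisely the statement that $\mathcal{F}\circ\mathcal{E}$ is $(\alpha,\epsilon)$-QRDP. Note that adjacency is preserved because we use the \emph{same} $\rho,\sigma$ throughout; $\mathcal{F}$ acts only on the output side and never touches the input closeness condition $\tau(\rho,\sigma)\le d$.

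The main obstacle is essentially bookkeeping rather than analysis: one must be careful that the QRDP definition in this paper quantifies the divergence over all POVMs on the \emph{output} Hilbert space, so the pulled-back measurement $M'$ must be verified to land inside that quantified class — i.e., that $\mathcal{F}^{\dagger}$ sends POVMs to POVMs (positivity of each $\mathcal{F}^{\dagger}(M_m)$ and the completeness relation $\sum_m \mathcal{F}^{\dagger}(M_m) = \mathcal{F}^{\dagger}(I) = I$, the latter using that $\mathcal{F}$ is trace preserving hence $\mathcal{F}^{\dagger}$ is unital). If $\mathcal{F}$ is allowed to be a general quantum operation that is only trace non-increasing, a small normalization remark is needed, but under the standard assumption that $\mathcal{F}$ is a channel this is immediate. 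Once that is in place, the result follows with no computation, and in particular there is no need to track the evolution of the quantum states themselves — only the induced classical output distributions.
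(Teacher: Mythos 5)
Your proof is correct, but it takes a genuinely different route from the paper's. The paper argues by viewing the measured outputs $\mathcal{E}_{m}(\rho)$, $\mathcal{E}_{m}(\sigma)$ as classical distributions and invoking the classical data-processing inequality for R\'enyi divergence from~\cite{mironov2017renyi} (note that the inequality in the paper's proof is written with the direction reversed -- the data-processing inequality states $D_{\alpha}(g(P)\|g(Q))\le D_{\alpha}(P\|Q)$, and only that direction yields the conclusion; moreover the paper applies the quantum map $\mathcal{F}$ directly to the classical distribution $\mathcal{E}_m(\rho)$, which glosses over the fact that $\mathcal{F}$ acts on the quantum state \emph{before} measurement). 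You instead avoid any divergence inequality altogether: you pull the final POVM back through $\mathcal{F}$ via the adjoint channel, verify that $\{\mathcal{F}^{\dagger}(M_m)\}$ is again a POVM (positivity and unitality of $\mathcal{F}^{\dagger}$ for a trace-preserving $\mathcal{F}$), and then use the fact that Definition~\ref{qrdp} already quantifies over \emph{all} POVMs on the output of $\mathcal{E}$. What your approach buys is rigor and economy -- it handles the quantum-before-measurement structure of $\mathcal{F}$ correctly, needs no data-processing lemma, and makes the preservation of adjacency explicit; what the paper's approach buys (once the inequality direction is fixed and the post-processing is interpreted as a classical stochastic map on outcomes) is a one-line reduction to a known classical result. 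Your caveat about trace non-increasing operations is well placed, since the paper's notion of ``quantum operation'' is not pinned down.
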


\begin{proof}
Reference ~\cite{mironov2017renyi} proved that for a randomized mapping $g: R\mapsto R'$, two probability distribution $P$ and $Q$ satisfies $D_{\alpha } (P||Q)\le D_{\alpha } (g(P)||g(Q))$. Since our definitions of $\mathcal{E}_{m} (\rho)$ and $\mathcal{E}_{m} (\sigma)$ can be seen as two special probability distributions, for an arbitrary quantum operation $\mathcal{F}$, they satisfy 
\begin{align}
    D_{\alpha } (\mathcal{E}_{m} (\rho)||\mathcal{E}_{m} (\sigma))\le D_{\alpha } (F(\mathcal{E}_{m} (\rho))||F(\mathcal{E}_{m} (\sigma))). 
\end{align}
It means that if $\mathcal{E}(\cdot )$ is $(\alpha , \epsilon)$-QRDP, so is $F(\mathcal{E}(\cdot ))$. So QRDP is preserved by post-processing.
\end{proof}

Next, we prove some basic composition theorems for QRDP. These theorems are generalizations of classical RDP to the quantum setting. The composition theorems are particularly important for QDC in that they provide theoretical support for computing the total privacy budget and can protect the overall privacy in multiple quantum operations or complex quantum algorithms. The composition theorems are defined as follows.

\begin{proposition}[\bf{Basic Composition Theorem}] Let $M_{1}(\mathcal{E}_{1m1}): D_{1}\mapsto R_{1}$ be an $(\alpha , \epsilon_{1})$-quantum Rényi differential private algorithm. Let $M_{2}(\mathcal{E}_{2m2}): D_{2}\mapsto R_{2}$ be an $(\alpha , \epsilon_{2})$-quantum Rényi differential private algorithm. Assume that they are independent, their combination
    \begin{equation}
		h:D_{1}\otimes D_{2}\mapsto R_{1}\otimes R_{2}
    \end{equation}
is defined by 
    \begin{equation}
		h(\rho,\sigma)=M_{1}(\rho)\otimes M_{2}(\sigma)
    \end{equation}
for all density operators $\rho$ in $D_{1}$ and $\sigma$ in $D_{2}$. Then $h$ is $(\alpha , \epsilon_{1}+\epsilon_{2})$-quantum Rényi differential private.
\end{proposition}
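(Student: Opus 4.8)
The plan is to reduce the statement to the additivity of quantum Rényi divergence across independent mechanisms, following the classical RDP composition argument of~\cite{mironov2017renyi}. First I would unpack the hypotheses via Definition~\ref{qrdivergence}: the $(\alpha,\epsilon_i)$-QRDP guarantee of $M_i$ says that, for every POVM on $D_i$ and every pair of adjacent inputs, the resulting outcome distributions $P_i$ and $Q_i$ satisfy $D_{\alpha}(P_i\|Q_i)\le\epsilon_i$. Because $M_1$ and $M_2$ act on the separate registers $D_1$ and $D_2$ and are assumed independent, the outcome distribution of $h$ on a pair of adjacent product inputs factorizes into a product distribution on $R_1\otimes R_2$, say $P_1\times P_2$ versus $Q_1\times Q_2$.

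The second step is the key computation: for any order $\alpha>1$,
\[
D_{\alpha}(P_1\times P_2 \,\|\, Q_1\times Q_2)=D_{\alpha}(P_1\|Q_1)+D_{\alpha}(P_2\|Q_2).
\]
This is obtained directly from the defining expectation in Definition~\ref{qrdivergence}: the likelihood ratio of a product measure factorizes as $(P_1/Q_1)(P_2/Q_2)$, the expectation of a product of functions of independent coordinates splits as a product of expectations, and applying $\frac{1}{\alpha-1}\log$ turns that product into the claimed sum. Feeding in the hypotheses $D_{\alpha}(P_i\|Q_i)\le\epsilon_i$ yields $D_{\alpha}(h(\cdot)\|h(\cdot))\le\epsilon_1+\epsilon_2$ for every admissible POVM and every adjacent pair, which is precisely $(\alpha,\epsilon_1+\epsilon_2)$-QRDP. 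The boundary orders $\alpha=1$ and $\alpha=\infty$ then follow from the continuity convention fixed just after Definition~\ref{qrdivergence}, noting that the essential supremum relevant to $\alpha=\infty$ is likewise multiplicative over products.

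The main obstacle is not this divergence identity, which is routine, but making the \emph{independence} hypothesis rigorous in the quantum setting: one must rule out that an adversary applying an entangled POVM on the joint system $D_1\otimes D_2$ extracts more than the product structure built into $h$ allows. I would close this gap using the post-processing proposition proved above together with the data-processing inequality for $D_{\alpha}$: any joint measurement outcome on separate, non-interacting registers can be simulated by post-processing the pair of local outcomes, so the joint divergence cannot exceed the product-measurement divergence $\epsilon_1+\epsilon_2$. A secondary bookkeeping point is the adjacency condition $\tau(\cdot,\cdot)\le d$ on the tensor-product inputs; since the QRDP guarantee on each factor is already quantified over all admissible adjacent pairs for that factor, no additional trace-distance estimate is needed beyond invoking the two factorwise guarantees.
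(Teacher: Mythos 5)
Your core argument is the same as the paper's: because the two mechanisms act on separate registers and are independent, the outcome distributions factorize as products, the likelihood ratio of a product measure splits as $(P_1/Q_1)(P_2/Q_2)$, the expectation of a product over independent coordinates factorizes, and $\frac{1}{\alpha-1}\log$ converts this into additivity of the Rényi divergence, so the two hypotheses combine to give $\epsilon_1+\epsilon_2$. This is precisely the computation the paper carries out explicitly (summing over $x\in R_1$, $y\in R_2$ and splitting the double sum into a product of two sums, each bounded by $\exp((\alpha-1)\epsilon_i)$), so on the main line your proposal matches the paper's proof.

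The one place you go wrong is the auxiliary step you add to "make independence rigorous." Your claim that any joint POVM on the two non-interacting registers can be simulated by post-processing the pair of local measurement outcomes is false: entangled measurements are not in general LOCC-simulable, and the measured Rényi divergence (supremum over all POVMs of the classical divergence of outcome statistics) is known to be \emph{super}additive on product states, not subadditive. So if the definition of the composed mechanism required the QRDP bound to hold against arbitrary joint POVMs on the quantum output $M_1(\rho)\otimes M_2(\sigma)$, your post-processing argument would not close that gap, and in fact the stated bound $\epsilon_1+\epsilon_2$ could fail. Fortunately this issue does not arise in the paper's formulation: the mechanisms $M_i(\mathcal{E}_{imi})$ already incorporate their own measurements, so $h$ outputs a product of classical outcome distributions and only product measurements ever enter the argument. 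You should either drop the entangled-POVM discussion or state explicitly that the composition is defined at the level of measured (classical) outputs, which is the reading under which the additivity computation is valid.
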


\begin{proof}
We refer to the idea of Proposition 1 in~\cite{mironov2017renyi} for the proof. Let $\rho_{1},\rho_{2}\in D_{1}$ be neighbor quantum states, and let $\sigma_{1},\sigma_{2}\in D_{2}$ be neighbor quantum states. Eq.(\ref{qrdivergence}) can be
\begin{equation}
\begin{split}
		&exp[(\alpha-1)D_{\alpha}(h(\rho_{1},\sigma_{1})||h(\rho_{2},\sigma_{2}))]\\
&=\sum_{x\in R_{1}} \sum_{y\in R_{2}}p^{\alpha}((\mathcal{E}_{1m1})(\rho_{1})\otimes (\mathcal{E}_{2m2})(\sigma_{1})=x\otimes y)\cdot \\&p^{\alpha-1}(\mathcal{E}_{1M1}(\rho_{2})\otimes \mathcal{E}_{2M2}(\sigma_{2})=x\otimes y)\\
&= \sum_{x\in R_{1}} \sum_{y\in R_{2}}p^{\alpha}(\mathcal{E}_{1M1}(\rho_{1})=x)p^{\alpha}(\mathcal{E}_{2M2}(\sigma_{1})=y)\cdot\\&p^{\alpha-1}(\mathcal{E}_{1M1}(\rho_{2})=x)p^{\alpha-1} (\mathcal{E}_{2M2}(\sigma_{2})=y)\\
&=\sum_{x\in R_{1}} p^{\alpha}(\mathcal{E}_{1M1}(\rho_{1})=x)p^{\alpha-1}(\mathcal{E}_{1M1}(\rho_{2})=x)\cdot\\& \sum_{y\in R_{2}}p^{\alpha}(\mathcal{E}_{2M2}(\sigma_{1})=y) p^{\alpha-1} (\mathcal{E}_{2M2}(\sigma_{2})=y)\\
&\le exp \left((\alpha-1) \epsilon_{1}\right)\exp \left((\alpha-1) \epsilon_{2}\right)\\
&= exp \left((\alpha-1) (\epsilon_{1}+\epsilon_{2})\right).\\
\end{split}
    \end{equation}
Then we proved that $h$ is $(\alpha , \epsilon_{1}+\epsilon_{2})$-quantum Rényi differential private.
\end{proof}

We further derive the specific composition theorem, where mechanisms are related to each other. The proposition is as shown below.

\begin{proposition}[\bf{Associated Composition Theorem}] 
Let $M_{1}$   $(\mathcal{E}_{1m1}): D_{1}\mapsto R_{1}$ be an $(\alpha , \epsilon_{1})$-quantum Rényi differential private algorithm. Let $M_{2}(\mathcal{E}_{2m2}): R_{1}\times D\mapsto R_{2}$ be an $(\alpha , \epsilon_{2})$-quantum Rényi differential private algorithm. Then, the mechanism defined as $(X,Y)$, where $X \leftarrow M_{1}(D)$ and $Y \leftarrow M_{2}(X,D)$ satisfies $(\alpha , \epsilon_{1}+\epsilon_{2})$-quantum Rényi differential privacy.
\end{proposition}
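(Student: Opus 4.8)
The plan is to mirror the argument used for the Basic Composition Theorem above, which itself follows Proposition~1 of~\cite{mironov2017renyi}, but now accounting for the adaptive dependence of $M_{2}$ on the output of $M_{1}$. Fix two adjacent quantum datasets $\rho$ and $\sigma$ with $\tau(\rho,\sigma)\le d$; since the same dataset is fed to both stages, $\rho$ and $\sigma$ are simultaneously valid adjacent inputs to $M_{1}$ and, for every fixed value $x$ of the first output, to the map $D\mapsto M_{2}(x,D)$. The first key observation is that after the POVM measurement $M_{1}$ in $M_{1}(\mathcal{E}_{1m1})$, the variable $X$ is an ordinary classical random variable taking values in $R_{1}$, so the pair $(X,Y)$ has a well-defined joint probability mass function and we may condition on $X=x$.

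First I would expand $\exp\!\big((\alpha-1)D_{\alpha}((X,Y)(\rho)\,\|\,(X,Y)(\sigma))\big)$ as a double sum over $x\in R_{1}$ and $y\in R_{2}$ of $p^{\alpha}(X(\rho)=x,Y(x,\rho)=y)\,p^{1-\alpha}(X(\sigma)=x,Y(x,\sigma)=y)$. Using the chain rule for probabilities, each joint mass factors as $p(X(\cdot)=x)\,p(Y(x,\cdot)=y)$, so the summand splits into an $x$-part and a $y$-part, and the inner sum over $y$ becomes $\sum_{y}p^{\alpha}(Y(x,\rho)=y)\,p^{1-\alpha}(Y(x,\sigma)=y)=\exp\!\big((\alpha-1)D_{\alpha}(M_{2}(x,\rho)\,\|\,M_{2}(x,\sigma))\big)$. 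Next I would bound this inner factor: since $M_{2}$ is $(\alpha,\epsilon_{2})$-QRDP, for every fixed $x$ we have $D_{\alpha}(M_{2}(x,\rho)\,\|\,M_{2}(x,\sigma))\le\epsilon_{2}$, so the inner sum is at most $e^{(\alpha-1)\epsilon_{2}}$. Pulling this constant out of the outer sum leaves $e^{(\alpha-1)\epsilon_{2}}\sum_{x}p^{\alpha}(X(\rho)=x)\,p^{1-\alpha}(X(\sigma)=x)=e^{(\alpha-1)\epsilon_{2}}\exp\!\big((\alpha-1)D_{\alpha}(M_{1}(\rho)\,\|\,M_{1}(\sigma))\big)\le e^{(\alpha-1)\epsilon_{2}}e^{(\alpha-1)\epsilon_{1}}$, using the $(\alpha,\epsilon_{1})$-QRDP of $M_{1}$. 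Taking logarithms and dividing by $\alpha-1>0$ yields $D_{\alpha}((X,Y)(\rho)\,\|\,(X,Y)(\sigma))\le\epsilon_{1}+\epsilon_{2}$; the endpoint $\alpha\to 1$ follows by the continuity convention introduced after Definition~\ref{qrdivergence}, and since $\rho,\sigma$ and the POVMs were arbitrary, $(X,Y)$ is $(\alpha,\epsilon_{1}+\epsilon_{2})$-QRDP.

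The main obstacle I anticipate is justifying the factorization $p(X=x,Y=y\mid D)=p(X=x\mid D)\,p(Y=y\mid X=x,D)$ together with the claim that the Rényi divergence of the conditional distribution of $Y$ given $X=x$ is bounded by $\epsilon_{2}$ for \emph{every} $x$ — i.e., that ``$M_{2}$ is $(\alpha,\epsilon_{2})$-QRDP'' must be read as a guarantee uniform over its classical first argument, exactly as in the classical adaptive composition lemma. One must also check that feeding the identical quantum dataset into both stages does not disturb the adjacency condition $\tau(\rho,\sigma)\le d$, which is immediate but worth stating explicitly. Once these bookkeeping points are settled, the computation is a routine rearrangement of sums identical in form to the Basic Composition Theorem's proof.
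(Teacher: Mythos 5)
Your proof is correct and follows essentially the same route as the paper: the paper simply cites Proposition~1 of Mironov's RDP paper and notes that the post-measurement outcome distributions are ordinary classical distributions, whereas you carry out that classical adaptive-composition computation explicitly (chain-rule factorization, uniform-in-$x$ bound on the inner sum by $e^{(\alpha-1)\epsilon_{2}}$, then the $\epsilon_{1}$ bound on the outer sum). Your version is in fact more complete than the paper's, and you correctly flag the one reading that makes the argument go through — that $M_{2}$'s QRDP guarantee must hold uniformly over its classical first argument.
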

\begin{proof}
Proposition 1 in~\cite{mironov2017renyi} shows that the associated composition theorem holds in classical computing. Since our definitions of $\mathcal{E}_{m} (\rho)$ and $\mathcal{E}_{m} (\sigma)$ can be seen as two special probability distributions, the associated composition theorem also holds in quantum computing.
\end{proof}

\section{Differential Privacy Mechanism for Quantum Distributed Computing\label{sec:mechanism}}
Classical DP protects the user's privacy from being compromised by adding artificial noise to the model, such as Gaussian noise, and Laplace noise. QDP proposed by~\cite{zhou2017differential} promotes this idea by adding quantum noise to the quantum model to protect data privacy in the quantum environment. Next, we analyze how these quantum noise mechanisms will affect QRDP. We assume that the input to the quantum model is the quantum state $\rho$. $\mathcal{E}(\rho)$ denotes the result of that quantum state after the quantum gate $\mathcal{E}$ operation. Then we add a quantum noise $\mathcal{E}_{N}$  to the calculation result, which gets the final result $\mathcal{E}_{N}(\mathcal{E}(\rho))$. We can now define the quantum noise mechanism that implements QRDP as:
\begin{equation}
    \mathcal{E}_N \circ \mathcal{E} : \rho \mapsto \mathcal{E}_N(\mathcal{E}(\rho)).
\end{equation}

We assumed that there are two neighbor quantum states $\rho$ and $\sigma$ which satisfies $\tau (\rho,\sigma)\le d$. For ease of representation, $\rho'=\mathcal{E}(\rho)$, $\sigma'=\mathcal{E}(\sigma)$ represent the result of the quantum state after quantum operation $\mathcal{E}$. Let's take a concrete example. If the quantum state is a $2\times2$ matrix, after a $2\times2$ quantum gate, $\rho'$ and $\sigma'$ are still $2\times2$ density matrices, which can be expressed as

\[
\rho' = \begin{bmatrix} a & b \\ b^* & c \end{bmatrix}, \quad
\delta' = \rho' - \sigma' = \begin{bmatrix} \delta_a & \delta_b \\ \delta_b^* & -\delta_a \end{bmatrix},
\]
\[
\sigma' = \rho'-\delta'=\begin{bmatrix} a-\delta_a & b-\delta_b \\ b^*-\delta_b^* & c+\delta_a \end{bmatrix},
\]
where $c=1-a$ and $\tau (\rho',\sigma')\le d$. We take $\rho'$ as an example. $a$ and $c$ are real numbers denoting the distribution probability of the quantum state. $b$ and $b^{*}$ are complex numbers related to the coherence of the quantum state, and $b^{*}$ is the complex conjugate of $b$. These values are determined when the input quantum state and the quantum operation are determined. Then $\rho''=\mathcal{E}_{N}(\rho')$, $\sigma''=\mathcal{E}_{N}(\sigma')$ represent the result after quantum noise $\mathcal{E}_{N}$. 

In the sequel, we give theorems under different noise mechanisms: generalized amplitude damping, the combination of phase and amplitude damping, and depolarizing mechanism.

\subsection{Generalized amplitude damping}

The generalized amplitude damping (GAD) noise model simulates the evolution of a quantum system as it exchanges energy with the external environment. According to our definition, the GAD noise mechanism used to implement QRDP can be expressed as
\begin{equation}
    \rho \mapsto \mathcal{M}_{\text{GAD}}(\rho) = \mathcal{E}_{\text{GAD}}[\mathcal{E}(\rho)].
\end{equation}

We take a more concrete example to understand the calculation of this mechanism. Assuming that the GAD noise acts on a single qubit, the GAD channel can be expressed as
\begin{equation}
   \mathcal{E}_{GAD}(\rho)=\sum_{k=0}^{3} E_{k}\rho E_{k}^{\dagger },
   \label{gad}
\end{equation}
in the two-dimensional Hilbert space, where 
\[
E_0 = \sqrt{p} \begin{bmatrix} 1 & 0 \\ 0 & \sqrt{1-\gamma} \end{bmatrix}, \quad
E_1 = \sqrt{p} \begin{bmatrix} 0 & \sqrt{\gamma} \\ 0 & 0 \end{bmatrix}, 
\]
\[
E_2 = \sqrt{1-p} \begin{bmatrix} \sqrt{1-\gamma} & 0 \\ 0 & 1 \end{bmatrix}, \quad
E_3 = \sqrt{1-p} \begin{bmatrix} 0 & 0 \\ \sqrt{\gamma} & 0 \end{bmatrix}.
\]
$E_0$ and $E_1$ mainly describe the energy exchange of qubits with the external environment in a low-temperature environment; $E_2$ and $E_3$ mainly describe in a high-temperature environment. $p$ and $\gamma$ are two parameters. $p$ denotes the probability that a qubit exchanges energy with the environment. $\gamma$ describes the strength of the coupling between the qubits and the environment~\cite{cafaro2014approximate}. To simplex the mechanism as~\cite{zhou2017differential}, we assume that $p=0.5$. Then $\rho''$ and $\sigma''$ can be expressed as

\[
\rho'' = \mathcal{E}_{\text{GAD}}(\rho') = \begin{bmatrix} 
a + \frac{1}{2}(c-a)\gamma & b\sqrt{1-\gamma} \\
b^*\sqrt{1-\gamma} & c + \frac{1}{2}(a-c)\gamma 
\end{bmatrix},
\]


\[
\begin{aligned}
&\sigma'' = \mathcal{E}_{\text{GAD}}(\sigma')\\
&= \begin{bmatrix} 
a + \frac{1}{2}(c-a)\gamma -\delta_a(1-\gamma) & b\sqrt{1-\gamma} -\delta_b\sqrt{1-\gamma}\\
b^*\sqrt{1-\gamma}-\delta_b^*\sqrt{1-\gamma} & c + \frac{1}{2}(a-c)\gamma +\delta_a(1-\gamma)
\end{bmatrix}.
\end{aligned}
\]

Now we can obtain the following new QRDP theorem under GAD noise mechanism.

\begin{theorem}[\bf{QRDP under Generalized Amplitude Damping}]
For all inputs $\rho$ and $\sigma$ such that $\tau(\rho,\sigma)\le d $, the generalized amplitude damping $M_{GAD}$ in the D-dimension Hilbert space provides $(\alpha,\epsilon)$-quantum Rényi differential private,
\begin{equation}
\begin{split}
   &\frac{1}{\alpha-1} \log E_{\mathcal{E}_{m}(\sigma)}\left(\frac{\mathcal{E}_{m} (\rho)}{\mathcal{E}_{m} (\sigma)} \right)^{\alpha}\\
    &=\frac{1}{\alpha-1} \log \sum_{m} \frac{[Tr(M(\rho'')]^{\alpha }}{[Tr(M(\sigma'')]^{\alpha-1}} \\
    &\le \epsilon,
    \label{gadeps}
\end{split}
\end{equation}
where $\rho''=\mathcal{E}_{GAD}(\rho')$, $\sigma''=\mathcal{E}_{GAD}(\sigma')$, $M=M_{m}$ is the measurement matrix. 

\end{theorem}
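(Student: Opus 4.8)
The plan is to reduce the quantum R\'enyi divergence of the measured output distributions to an explicit function of the post-noise density operators $\rho''=\mathcal{E}_{\mathrm{GAD}}(\rho')$ and $\sigma''=\mathcal{E}_{\mathrm{GAD}}(\sigma')$, and then to bound the pointwise likelihood ratio $\mathrm{Tr}(M_m\rho'')/\mathrm{Tr}(M_m\sigma'')$ uniformly over every POVM element $M_m$ and every admissible neighbour pair. First I would unpack Definition~\ref{qrdivergence}: writing $p_m:=\mathrm{Tr}(M_m\rho'')$ and $q_m:=\mathrm{Tr}(M_m\sigma'')$ for the outcome probabilities of $\mathcal{E}_m$, the expectation $E_{\mathcal{E}_m(\sigma)}\bigl(\mathcal{E}_m(\rho)/\mathcal{E}_m(\sigma)\bigr)^{\alpha}$ is by definition $\sum_m q_m (p_m/q_m)^{\alpha}=\sum_m p_m^{\alpha}/q_m^{\alpha-1}$, which is exactly the first equality in the statement (terms with $q_m=0$ force $M_m=0$ and are discarded). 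Since $\{M_m\}$ is an arbitrary POVM, it then suffices to bound $\sum_m p_m^{\alpha}/q_m^{\alpha-1}$ by $e^{(\alpha-1)\epsilon}$ for a single explicit constant $\epsilon$ that does not depend on the POVM or on the particular $\rho,\sigma$ with $\tau(\rho,\sigma)\le d$.

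\textbf{Key steps.} The core is a uniform bound on $p_m/q_m$. I would write $p_m=q_m+\mathrm{Tr}\bigl(M_m(\rho''-\sigma'')\bigr)$ and apply the H\"older-type estimates $\bigl|\mathrm{Tr}(M_m(\rho''-\sigma''))\bigr|\le \|\rho''-\sigma''\|_{\mathrm{op}}\,\mathrm{Tr}(M_m)$ and $q_m\ge \lambda_{\min}(\sigma'')\,\mathrm{Tr}(M_m)$, which give $p_m/q_m\le 1+\|\rho''-\sigma''\|_{\mathrm{op}}/\lambda_{\min}(\sigma'')$. The two ingredients are then read off from the explicit matrix forms of $\rho''$ and $\sigma''$ displayed above (and their $D$-dimensional analogue): because the GAD channel at $p=1/2$ scales the coherences by $\sqrt{1-\gamma}$ and the population imbalance $a-c$ by $1-\gamma$, the constraint $\tau(\rho',\sigma')\le d$ yields $\|\rho''-\sigma''\|_{\mathrm{op}}\le \sqrt{1-\gamma}\,d$, while the positivity constraint $|b|^2\le ac$ of a valid density operator forces every eigenvalue of $\sigma''$ into $\bigl[\tfrac12(1-\sqrt{1-\gamma}),\,\tfrac12(1+\sqrt{1-\gamma})\bigr]$, hence $\lambda_{\min}(\sigma'')\ge \tfrac12(1-\sqrt{1-\gamma})$. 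Setting $R:=1+2d\sqrt{1-\gamma}/(1-\sqrt{1-\gamma})$ we obtain $p_m/q_m\le R$ for all $m$, uniformly over the allowed inputs and POVMs.

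\textbf{Assembling the bound.} Using $\sum_m q_m=\mathrm{Tr}(\sigma'')=1$ (POVM completeness) and $\alpha>1$, one gets $\sum_m p_m^{\alpha}/q_m^{\alpha-1}=\sum_m q_m (p_m/q_m)^{\alpha}\le R^{\alpha}\sum_m q_m=R^{\alpha}$; taking $\tfrac1{\alpha-1}\log(\cdot)$ yields $D_\alpha(\mathcal{E}_m(\rho)\|\mathcal{E}_m(\sigma))\le \tfrac{\alpha}{\alpha-1}\log R=:\epsilon$, which is the claimed inequality, with $\epsilon$ depending only on $\alpha$, $\gamma$, $d$. The $D$-dimensional case follows either by repeating the eigenvalue/operator-norm estimates for the $D$-dimensional GAD channel, or, for a tensor-product GAD, by combining the single-qubit bound with the Basic Composition Theorem.

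\textbf{Main obstacle.} I expect the delicate point to be the eigenvalue estimate $\lambda_{\min}(\sigma'')\ge \tfrac12(1-\sqrt{1-\gamma})$ and its $D$-dimensional form: one must verify that GAD post-processing with $p=1/2$ genuinely pushes \emph{every} admissible neighbour away from the boundary of the density-operator cone, and keep the $\gamma$-dependence as tight as possible, since the paper advertises the \emph{lowest} attainable privacy budget; a cruder bound on $\lambda_{\min}(\sigma'')$ or on $\|\rho''-\sigma''\|_{\mathrm{op}}$ would still close the argument but would inflate $\epsilon$. The remaining pieces --- the definitional expansion of the R\'enyi divergence and the final summation over measurement outcomes --- are routine.
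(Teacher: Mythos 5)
Your proposal is correct, but it does substantially more work than the paper's own proof, and along a different route. The paper proves this theorem only by the definitional expansion you carry out in your first paragraph: it writes $E_{\mathcal{E}_m(\sigma)}\bigl(\mathcal{E}_m(\rho)/\mathcal{E}_m(\sigma)\bigr)^{\alpha}=\sum_m P(\mathcal{E}_m(\rho)=m)^{\alpha}/P(\mathcal{E}_m(\sigma)=m)^{\alpha-1}=\sum_m [\mathrm{Tr}(M_m\rho'')]^{\alpha}/[\mathrm{Tr}(M_m\sigma'')]^{\alpha-1}$ and stops there; the ``$\le\epsilon$'' is not derived but is effectively the definition of $\epsilon$ as the exact value of this expression for the given states and POVM (this is what the paper means by the ``lowest privacy budget'' in Section 4.4). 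Your additional step --- the uniform likelihood-ratio bound $p_m/q_m\le 1+\|\rho''-\sigma''\|_{\mathrm{op}}/\lambda_{\min}(\sigma'')$, with $\|\rho''-\sigma''\|_{\mathrm{op}}\le\sqrt{1-\gamma}\,d$ from contraction of the Bloch vector difference and $\lambda_{\min}(\sigma'')\ge\tfrac12(1-\sqrt{1-\gamma})$ from the shrinking of the Bloch ball --- is sound (both estimates check out for $p=1/2$ GAD) and recovers exactly the Zhou--Ying QDP constant $R=1+2d\sqrt{1-\gamma}/(1-\sqrt{1-\gamma})$; the paper only obtains an explicit, input-independent budget later, in Theorem 4, by citing that QDP constant and converting pure DP to R\'enyi DP via the tight conversion $\hat{\epsilon}(\alpha)=\epsilon-\tfrac{1}{\alpha-1}\log\bigl((1+e^{-\epsilon})/(1+e^{-(2\alpha-1)\epsilon})\bigr)$. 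So what your approach buys is a self-contained derivation of the constant from the channel matrices; what the paper's route buys is a tighter final bound. One small improvement to your assembly step: rather than $\sum_m q_m(p_m/q_m)^{\alpha}\le R^{\alpha}$, which gives $\epsilon=\tfrac{\alpha}{\alpha-1}\log R>\log R$ (inconsistent with monotonicity of $D_\alpha$ in $\alpha$), factor the sum as $\sum_m p_m(p_m/q_m)^{\alpha-1}\le R^{\alpha-1}$ to get $\epsilon=\log R$ directly; the paper's Theorem 4 constant is tighter still.
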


When the GAD noise mechanism acts on a single qubit, $\rho''$ and $\sigma''$ can be brought into our example.

\begin{proof}
We express $\mathcal{E}_{m} (\cdot)$ in the form of a matrix calculation, including quantum operation and quantum measurement. Then, Eq.(\ref{qrdivergence}) can be expanded as
\begin{equation}
\begin{split}
    &\frac{1}{\alpha-1} \log E_{\mathcal{E}_{m}(\sigma)}\left(\frac{\mathcal{E}_{m} (\rho)}{\mathcal{E}_{m} (\sigma)} \right)^{\alpha}\\
    &=\frac{1}{\alpha-1} \log \sum_{m} \left(\frac{P(\mathcal{E}_{m} (\rho)=m)}{P(\mathcal{E}_{m} (\sigma)=m)} \right)^{\alpha} \cdot P(\mathcal{E}_{m} (\sigma)=m)\\
    &=\frac{1}{\alpha-1} \log \sum_{m} \frac{P(\mathcal{E}_{m} (\rho)=m)^{\alpha }}{P(\mathcal{E}_{m} (\sigma)=m)^{\alpha-1}}\\
    &=\frac{1}{\alpha-1} \log \sum_{m} \frac{[Tr(M(\rho'')]^{\alpha }}{[Tr(M(\sigma'')]^{\alpha-1}} \\
    &\le \epsilon.
\end{split}
\end{equation}
\end{proof}

\subsection{Composition of phase and amplitude damping}
The composition of phase and amplitude damping (PAD) noise model describes qubits as being affected by the loss of phase information while undergoing energy decay. According to our definition, the PAD noise mechanism used to implement QRDP can be expressed as
\begin{equation}
    \rho \mapsto \mathcal{M}_{\text{PAD}}(\rho) = \mathcal{E}_{\text{GAD}}[\mathcal{E}_{\text{PD}}(\mathcal{E}(\rho))].
    \label{pad}
\end{equation}

We take a more concrete example. Assuming that the PD noise acts on a single qubit, the PD channel can be expressed as
\begin{equation}
   \mathcal{E}_{PD}(\rho)= E_{0}\rho E_{0}^{\dagger }+ E_{1}\rho E_{1}^{\dagger },
\end{equation}
in the two-dimensional Hilbert space, where 
\[
E_0 = \begin{bmatrix} 
1 & 0 \\
0 & \sqrt{1-\lambda} 
\end{bmatrix}, \quad
E_1 = \begin{bmatrix} 
0 & 0 \\
0 & \sqrt{\lambda} 
\end{bmatrix}.
\]
$E_0$ mainly describes the phase information of a qubit, but may result in a decrease in the amplitude of the quantum state (related to $\lambda $). $E_1$ results in a loss of phase of the quantum state, but does not change the energy of the quantum state. $\lambda$ represents the probability that the phase information will be lost. In this case, $\rho'$ and $\sigma'$ after phase damping noise can be expressed as

\[
\rho_{PD}''=\mathcal{E}_{\text{PD}}(\rho') = \begin{bmatrix} 
a & b\sqrt{1-\lambda} \\
b^*\sqrt{1-\lambda} & c
\end{bmatrix},
\]

\[
\begin{aligned}
\sigma_{PD}''&=\mathcal{E}_{\text{PD}}(\sigma')\\ &= \begin{bmatrix} 
a-\delta_{a} & (b-\delta_{b})\sqrt{1-\lambda} \\
(b^*-\delta_{b^*})\sqrt{1-\lambda} & c+\delta_{a}
\end{bmatrix}.
\end{aligned}
\]

Then, we can get final $\rho''$ and $\sigma''$ after the composition of phase and amplitude damping noise:
\[
\rho'' = \mathcal{E}_{\text{GAD}}(\rho_{PD}'') = \begin{bmatrix} 
a + \frac{1}{2}(c-a)\gamma & b\sqrt{1-\gamma}\sqrt{1-\lambda} \\
b^*\sqrt{1-\gamma}\sqrt{1-\lambda} & c + \frac{1}{2}(a-c)\gamma 
\end{bmatrix},
\]

\[
\begin{aligned}
&\sigma'' = \mathcal{E}_{\text{GAD}}(\sigma_{PD}'') \\
&= \begin{bmatrix} 
a + \frac{1}{2}(c-a)\gamma -\delta_a(1-\gamma) & (b-\delta_b)\sqrt{1-\gamma}\sqrt{1-\lambda}\\
(b^*-\delta_b^*)\sqrt{1-\gamma}\sqrt{1-\lambda} & c + \frac{1}{2}(a-c)\gamma +\delta_a(1-\gamma)
\end{bmatrix}.
\end{aligned}
\]

Now we can obtain the following new QRDP theorem under PAD noise mechanism.

\begin{theorem}[\bf{QRDP under Composition of Phase and Amplitude Damping}]
For all inputs $\rho$ and $\sigma$ such that $\tau(\rho,\sigma)\le d $, the composition of phase and amplitude damping $M_{PAD}$ in the D-dimension Hilbert space provides $(\alpha,\epsilon)$-quantum Rényi differential private,
\begin{equation}
\begin{split}
   &\frac{1}{\alpha-1} \log E_{\mathcal{E}_{m}(\sigma)}\left(\frac{\mathcal{E}_{m} (\rho)}{\mathcal{E}_{m} (\sigma)} \right)^{\alpha}\\
    &=\frac{1}{\alpha-1} \log \sum_{m} \frac{[Tr(M(\rho'')]^{\alpha }}{[Tr(M(\sigma'')]^{\alpha-1}} \\
    &\le \epsilon,
    \label{padeps}
\end{split}
\end{equation}
where $\rho''=\mathcal{E}_{PAD}(\rho')$, $\sigma''=\mathcal{E}_{PAD}(\sigma')$, $M=M_{m}$ is the measurement matrix. 

\end{theorem}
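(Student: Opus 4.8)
The plan is to mirror the proof of the GAD theorem almost verbatim, since the statement is again just the unfolding of the definition of quantum R\'enyi divergence for the concrete noisy outputs produced by the PAD channel. First I would recall that for a POVM $M=\{M_m\}$ the outcome distribution of a state $\eta$ is $P(\mathcal{E}_m(\eta)=m)=\text{Tr}(M_m\eta)$, so that the two ``probability distributions'' $\mathcal{E}_m(\rho)$ and $\mathcal{E}_m(\sigma)$ appearing in Definition~\ref{qrdivergence}, once the noisy channel is applied, are the families $\{\text{Tr}(M_m\rho'')\}_m$ and $\{\text{Tr}(M_m\sigma'')\}_m$, where $\rho''=\mathcal{M}_{\text{PAD}}(\rho)=\mathcal{E}_{\text{GAD}}[\mathcal{E}_{\text{PD}}(\rho')]$ and likewise for $\sigma''$, with $\rho',\sigma'$ the post-gate states introduced before the theorem.

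Second, I would substitute these into Definition~\ref{qrdivergence} and carry out the same algebraic expansion used for GAD: rewrite $E_{\mathcal{E}_m(\sigma)}\big(\mathcal{E}_m(\rho)/\mathcal{E}_m(\sigma)\big)^{\alpha}$ as $\sum_m \big(P(\mathcal{E}_m(\rho)=m)/P(\mathcal{E}_m(\sigma)=m)\big)^{\alpha}\cdot P(\mathcal{E}_m(\sigma)=m)$, cancel one power of the denominator, and replace every probability by its trace expression, arriving at $\frac{1}{\alpha-1}\log\sum_m [\text{Tr}(M_m\rho'')]^{\alpha}/[\text{Tr}(M_m\sigma'')]^{\alpha-1}$, which is exactly the middle line of the claimed identity. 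The concluding $\le\epsilon$ is then immediate from Definition~\ref{qrdp}: this quantity is by construction $D_\alpha(\mathcal{E}_m(\rho)\|\mathcal{E}_m(\sigma))$ for the PAD mechanism, and asserting that $M_{\text{PAD}}$ ``provides $(\alpha,\epsilon)$-QRDP'' is precisely the statement that this divergence is at most $\epsilon$ for every POVM and every pair with $\tau(\rho,\sigma)\le d$.

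Along the way I would record the two structural facts that make the expansion legitimate. First, $\mathcal{E}_{\text{PD}}$ and $\mathcal{E}_{\text{GAD}}$ are trace preserving (their displayed Kraus operators satisfy $\sum_k E_k^{\dagger}E_k=I$), so $\rho'',\sigma''$ are genuine density matrices and $\{\text{Tr}(M_m\cdot)\}_m$ are genuine distributions; moreover both maps are completely positive, hence contract trace distance, giving $\tau(\rho'',\sigma'')\le\tau(\rho',\sigma')\le d$ so the hypothesis of Definition~\ref{qrdp} is met. Second, the explicit $2\times 2$ forms of $\rho''$ and $\sigma''$ stated before the theorem are obtained correctly by applying $\mathcal{E}_{\text{PD}}$ first (which multiplies the off-diagonal entries by $\sqrt{1-\lambda}$ and leaves the diagonal unchanged) and then $\mathcal{E}_{\text{GAD}}$ at $p=1/2$ (which pulls the diagonal toward $\tfrac12$ and multiplies the off-diagonal by $\sqrt{1-\gamma}$), so that the net effect on the coherences is the factor $\sqrt{1-\gamma}\sqrt{1-\lambda}$.

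The main obstacle is presentational rather than mathematical: as with its GAD counterpart, the theorem does not produce a closed-form $\epsilon$, so the ``proof'' is this definitional expansion together with the verification that the concrete composed output matrices are right. The only point where genuine care is needed is ensuring $\text{Tr}(M_m\sigma'')>0$ wherever it sits in a denominator, which I would handle either by restricting attention to POVM elements of nonzero weight or by appealing to the same continuity convention for $D_\alpha$ that the paper already uses to define the divergence at the endpoints of $(1,\infty)$.
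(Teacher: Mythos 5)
Your proposal matches the paper's argument: the paper proves the GAD case by the same definitional expansion (rewriting the Rényi divergence as a sum over outcomes and replacing probabilities by $\mathrm{Tr}(M_m\,\cdot\,)$ of the noisy states) and then states that the PAD case follows identically with the PAD-specific $\rho''$ and $\sigma''$. Your added remarks on trace preservation, contractivity of the trace distance, and positivity of the denominators are extra rigor the paper omits, but they do not change the route.
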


The proof is similar to Theorem 1, with the difference being the $\rho''$ and $\sigma''$ due to different noise. When the PAD noise mechanism acts on a single qubit, $\rho''$ and $\sigma''$ can be brought into our example.



\subsection{Depolarizing mechanism}
The depolarizing (Dep) noise model describes the randomization of a qubit due to environmental disturbances during transmission or evolution. When a qubit is affected by noise, it has a certain probability of losing its original quantum information and becoming completely mixed. According to our definition, the Dep noise mechanism used to implement QRDP can be expressed as
\begin{equation}
    \rho \mapsto \mathcal{M}_{\text{Dep}}(\rho) = \mathcal{E}_{\text{Dep}}[\mathcal{E}(\rho)].
\end{equation}

The depolarizing noise can be expressed in terms of quantum operations as
\begin{equation}
    \mathcal{E}_{\text{Dep}}(\rho) = \frac{pI}{D} + (1 - p)\rho,
    \label{dep}
\end{equation}
where $\rho$ is the initial quantum state, $I$ is the unit operator, $p$ is the depolarizing noise error rate, and $D$ is the dimension of the Hilbert space. Within the depolarizing noise channel, the quantum state remains unchanged with a probability of $1-p$. However, with a probability of $\frac{p}{D}$, it transitions uniformly to any possible quantum state.

We take a more concrete example to understand the calculation of this mechanism. Assuming that the Dep noise acts on a single qubit, $\rho''$ and $\sigma''$ can be expressed as 
\[
\rho'' = \mathcal{E}_{\text{Dep}}(\rho') = \begin{bmatrix} 
\frac{pI}{D} + (1 - p)a & b(1-p) \\
b^*(1-p) & \frac{pI}{D} + (1 - p)c 
\end{bmatrix},
\]

\[
\begin{aligned}
&\sigma'' = \mathcal{E}_{\text{Dep}}(\sigma') \\
&= \begin{bmatrix} 
\frac{pI}{D} + (1 - p)(a-\delta_{a}) & (b-\delta_{b})(1-p) \\
(b^*-\delta_{b^*})(1-p) & \frac{pI}{D} + (1 - p)(c+\delta_{a}) 
\end{bmatrix}.
\end{aligned}
\]

Now we can obtain the following new QRDP theorem under Dep noise mechanism.

\begin{theorem}[\bf{QRDP under Depolarizing Mechanism}]
For all inputs $\rho$ and $\sigma$ such that $\tau(\rho,\sigma)\le d $, the depolarizing mechanism $M_{Dep}$ in the D-dimension Hilbert space provides $\hat{\epsilon}(\alpha)$-quantum Rényi differential private,
\begin{equation}
\begin{split}
   &\frac{1}{\alpha-1} \log E_{\mathcal{E}_{m}(\sigma)}\left(\frac{\mathcal{E}_{m} (\rho)}{\mathcal{E}_{m} (\sigma)} \right)^{\alpha}\\
    &=\frac{1}{\alpha-1} \log \sum_{m} \frac{[Tr(M_{m}(\rho'')]^{\alpha }}{[Tr(M_{m}(\sigma'')]^{\alpha-1}} \\
    &\le \epsilon,
    \label{depeps}
\end{split}
\end{equation}
where $\rho''=\mathcal{E}_{Dep}(\rho')$, $\sigma''=\mathcal{E}_{Dep}(\sigma')$, $M=M_{m}$ is the measurement matrix. 

\end{theorem}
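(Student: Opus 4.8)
The plan is to follow the same template used for Theorem~1, since the only structural change is the noise channel that produces $\rho''$ and $\sigma''$. First I would expand the quantum Rényi divergence from Definition~\ref{qrdivergence} by writing $\mathcal{E}_m(\cdot)$ as the probability distribution over POVM outcomes: for each outcome $m$, $P(\mathcal{E}_m(\rho)=m)=\mathrm{Tr}(M_m\rho'')$ with $\rho''=\mathcal{E}_{\mathrm{Dep}}(\mathcal{E}(\rho))=\mathcal{E}_{\mathrm{Dep}}(\rho')$, and likewise for $\sigma$. Substituting into the definition and collecting powers gives
\[
D_\alpha(\mathcal{E}_m(\rho)\|\mathcal{E}_m(\sigma))=\frac{1}{\alpha-1}\log\sum_m\frac{[\mathrm{Tr}(M_m\rho'')]^\alpha}{[\mathrm{Tr}(M_m\sigma'')]^{\alpha-1}},
\]
which is exactly the middle line of Eq.(\ref{depeps}); the bound by $\epsilon$ is then read as the definition of $\hat\epsilon(\alpha)$, namely the supremum of the right-hand side over all POVMs $M=\{M_m\}$ and all neighbouring pairs with $\tau(\rho,\sigma)\le d$.

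Next I would substitute the explicit depolarizing action from Eq.(\ref{dep}): $\rho''=\tfrac{p}{D}I+(1-p)\rho'$ and $\sigma''=\tfrac{p}{D}I+(1-p)\sigma'$. The useful structural observation is that the depolarizing channel is a convex mixture with the maximally mixed state and is applied identically to both inputs, so $\rho''-\sigma''=(1-p)(\rho'-\sigma')$ and hence $\tau(\rho'',\sigma'')=(1-p)\,\tau(\rho',\sigma')\le(1-p)d$, using $\tau(\rho',\sigma')\le d$ from the setup. Writing $q_m=\mathrm{Tr}(M_m)/D$ for the ``background'' weight of outcome $m$ (so that $\sum_m q_m=1$) and $\Delta_m=(1-p)\,\mathrm{Tr}(M_m(\rho'-\sigma'))$, each summand becomes $\dfrac{\big(p q_m+(1-p)\mathrm{Tr}(M_m\rho')\big)^\alpha}{\big(p q_m+(1-p)\mathrm{Tr}(M_m\rho')-\Delta_m\big)^{\alpha-1}}$, and bounding each ratio in terms of $|\Delta_m|$ relative to the floor $p q_m$ and summing over $m$ yields a closed form for $\hat\epsilon(\alpha)$ in terms of $\alpha$, $p$, $D$ and $d$ (and of the single-qubit values $a,b$ when $D=2$, where $\rho''$ and $\sigma''$ of the example can be plugged in directly).

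The main obstacle is the optimization hidden in the definition of $\hat\epsilon(\alpha)$: the right-hand side of Eq.(\ref{depeps}) must be maximized over all POVMs and all admissible neighbouring state pairs, and unlike the classical $(\epsilon,\delta)$ setting this extremum need not be attained at a two-outcome projective measurement. I expect the cleanest route is to exploit the strictly positive floor $\mathrm{Tr}(M_m\sigma'')\ge\tfrac{p}{D}\mathrm{Tr}(M_m)>0$ guaranteed by the depolarizing term --- precisely the property that keeps the ratio inside the logarithm bounded for a depolarizing mechanism --- and then to reduce the POVM optimization to a one-dimensional problem in the signed ratio $\Delta_m/\mathrm{Tr}(M_m\sigma'')$, whose extreme magnitude is controlled jointly by $(1-p)d$ and $p/D$. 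Once that reduction is in place, the remaining estimate is the same routine calculus already carried out in spirit for Theorems~1 and~2, with $\rho''$ and $\sigma''$ replaced by the depolarizing expressions above.
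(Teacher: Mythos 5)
Your first step --- expanding the quantum R\'enyi divergence into $\frac{1}{\alpha-1}\log\sum_m [\mathrm{Tr}(M_m\rho'')]^{\alpha}/[\mathrm{Tr}(M_m\sigma'')]^{\alpha-1}$ with $\rho''=\mathcal{E}_{\mathrm{Dep}}(\rho')$, $\sigma''=\mathcal{E}_{\mathrm{Dep}}(\sigma')$ --- is exactly what the paper does; its proof of this theorem is literally ``same as Theorem~1 with the new $\rho''$, $\sigma''$,'' and Theorem~1's proof is nothing more than that expansion, with the final ``$\le\epsilon$'' read essentially as the definition of the (lowest) privacy budget rather than as a bound that is derived. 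Where you diverge is in the second half: you go on to actually extract a closed-form $\hat\epsilon(\alpha)$ from the depolarizing structure, using $\rho''-\sigma''=(1-p)(\rho'-\sigma')$, the floor $\mathrm{Tr}(M_m\sigma'')\ge \frac{p}{D}\mathrm{Tr}(M_m)$, and the estimate $|\mathrm{Tr}(M_m(\rho'-\sigma'))|\le \mathrm{Tr}(M_m)\,d$ for $0\preceq M_m\preceq I$, which gives a uniform per-outcome ratio bound $1+\frac{(1-p)dD}{p}$ and hence recovers Zhou et al.'s $\epsilon=\ln[1+\frac{1-p}{p}dD]$ directly from the quantum measurement statistics. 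The paper never does this for Theorem~3; it defers the explicit formula to Theorem~4, where it imports the classical pure-DP-to-RDP conversion of~\cite{DPorg-pdp-to-zcdp} and simply substitutes the QDP budgets from~\cite{zhou2017differential}. Your route is self-contained and makes explicit the POVM optimization that the paper glosses over (and your caveat that the extremum need not sit at a two-outcome projective measurement is a fair criticism of the paper's implicit reasoning), at the cost of a slightly looser constant unless you redo the tight two-point analysis; the paper's route gets the tight conversion for free but leaves Theorem~3 itself with no substantive content beyond the definitional identity. No gap in your argument --- it proves at least as much as the paper does here.
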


The proof is similar to Theorem 1, with the difference being the $\rho''$ and $\sigma''$ due to different noise. When the Dep noise mechanism acts on a single qubit, $\rho''$ and $\sigma''$ can be brought into our example.

\subsection{An intuitive QRDP under different noise mechanisms}
From Eq.(\ref{gadeps}) - (\ref{depeps}), we derive the lowest privacy budget under different noise mechanisms since we do not scale the formulas during the computation, thereby allowing for a more precise estimation of the privacy budget. However, we found that at this point the privacy budget is related to numerous parameters, including the input quantum state and the measurement matrix. To generalize this, we now propose a more comprehensive privacy budget calculation that not only serves as an upper bound for the privacy budget obtained across different input quantum states and measurements but also provides a clearer understanding of how different noise parameters affect the privacy budget. By relaxing the formulation, we introduce a more intuitive theorem that focuses on the privacy budget derived solely from the quantum noise parameters. Both ways of calculating the privacy budget are valid, and the readers can choose the appropriate QRDP formula based on their specific application requirements. The intuitive QRDP theorem is as follows.

\begin{theorem}[\bf{Intuitive QRDP under Different Noise Mechanisms}\label{intuitiveqrdp}]
For all inputs $\rho$ and $\sigma$ such that $\tau(\rho,\sigma)\le d $, the depolarizing mechanism $M$ in the D-dimension Hilbert space provides $\hat{\epsilon}(\alpha)$-quantum Rényi differential private,
\begin{equation}
\begin{split}
   \hat{\epsilon}(\alpha)
   &=\epsilon-\frac{1}{\alpha-1} \log \left(\frac{1+e^{-\epsilon}}{1+e^{-(2 \alpha-1) \epsilon}}\right),
   \label{qdpqrdp}
\end{split}
\end{equation}
where $\epsilon= \ln \left[ 1 + \frac{2d \sqrt{1 - \gamma}}{1 - \sqrt{1 - \gamma}} \right]$ under the generalized amplitude damping; $\epsilon= \ln \left[ 1 + \frac{2d \sqrt{1 - \gamma}\sqrt{1 - \lambda}}{1 - \sqrt{1 - \gamma}\sqrt{1 - \lambda}} \right]$ under the composition of phase and amplitude damping; $\epsilon=\ln_{}{\left [ 1+\frac{1-p}{p}dD \right ] }$ under the depolarizing mechanism.
\end{theorem}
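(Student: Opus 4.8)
The plan is to split the argument into two logically independent pieces: first, that each of the three channels is pure $\epsilon$-QDP (equivalently $(\infty,\epsilon)$-QRDP) with exactly the $\epsilon$ listed in the statement; and second, a generic conversion showing that any $\epsilon$-QDP mechanism is $(\alpha,\hat\epsilon(\alpha))$-QRDP. Composing the two pieces and substituting the channel-specific $\epsilon$ then yields the theorem. Note that $\hat\epsilon(\alpha)$ is increasing in $\alpha$ with $\hat\epsilon(\alpha)\uparrow\epsilon$ as $\alpha\to\infty$ (at $\alpha=1$ it reduces to $\epsilon\tanh(\epsilon/2)$), so this is a genuinely sharper statement than the trivial ``$\epsilon$-QDP $\Rightarrow (\alpha,\epsilon)$-QRDP''.

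For the first piece I would take the explicit post-noise density matrices $\rho''$ and $\sigma''$ already displayed above for GAD, PAD and Dep, form the outcome ratio $\mathrm{Tr}(M_m\rho'')/\mathrm{Tr}(M_m\sigma'')$, and maximize it over POVM elements $0\le M_m\le I$ and over admissible inputs, i.e.\ over $a,b,\delta_a,\delta_b$ constrained by positivity of $\rho',\sigma'$ and by $\delta_a^2+|\delta_b|^2\le d^2$, which encodes $\tau(\rho',\sigma')\le d$. Using the Helstrom characterization $\tau(\rho',\sigma')=\max_{0\le M\le I}\mathrm{Tr}(M(\rho'-\sigma'))$ together with the contraction factors each channel imprints (a $\sqrt{1-\gamma}$ on coherences and $(1-\gamma)$ on populations for GAD, an extra $\sqrt{1-\lambda}$ for PAD, and the $(1-p)$ scaling plus the $pI/D$ offset for Dep), this optimization returns exactly $e^{\epsilon}=1+\tfrac{2d\sqrt{1-\gamma}}{1-\sqrt{1-\gamma}}$, $e^{\epsilon}=1+\tfrac{2d\sqrt{1-\gamma}\sqrt{1-\lambda}}{1-\sqrt{1-\gamma}\sqrt{1-\lambda}}$ and $e^{\epsilon}=1+\tfrac{1-p}{p}dD$ respectively. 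These are precisely the pure-QDP budgets of Zhou et al.\ specialized to these channels, so in each case $D_{\infty}(\mathcal{E}_m(\rho)\|\mathcal{E}_m(\sigma))\le\epsilon$ for every POVM $M$ and every neighboring pair.

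For the second piece, fix a POVM $M=\{M_m\}$ and neighbors $\rho,\sigma$, and set $P(m)=\mathrm{Tr}(M_m\rho'')$ and $Q(m)=\mathrm{Tr}(M_m\sigma'')$. Applying the $\epsilon$-QDP inequality (with $\delta=0$) to the singleton outcome sets $S=\{m\}$ and using symmetry of the neighbor relation gives the pointwise likelihood-ratio bound $r_m:=P(m)/Q(m)\in[e^{-\epsilon},e^{\epsilon}]$. From the definition of the quantum Rényi divergence,
\begin{equation}
\exp\bigl((\alpha-1)D_{\alpha}(\mathcal{E}_m(\rho)\|\mathcal{E}_m(\sigma))\bigr)=\sum_m P(m)^{\alpha}Q(m)^{1-\alpha}=\mathbb{E}_{m\sim Q}\bigl[r_m^{\alpha}\bigr],
\end{equation}
subject to the single linear constraint $\mathbb{E}_{m\sim Q}[r_m]=\sum_m P(m)=1$. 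Since $t\mapsto t^{\alpha}$ is strictly convex for $\alpha>1$, the maximum of $\mathbb{E}[r^{\alpha}]$ over all distributions of $r$ supported in $[e^{-\epsilon},e^{\epsilon}]$ with mean $1$ is attained at the two-point distribution on the endpoints, whose weights $\tfrac{1}{1+e^{\epsilon}}$ on $e^{\epsilon}$ and $\tfrac{e^{\epsilon}}{1+e^{\epsilon}}$ on $e^{-\epsilon}$ are forced by the mean constraint; this gives the value $\tfrac{e^{\alpha\epsilon}+e^{(1-\alpha)\epsilon}}{1+e^{\epsilon}}$. A short algebraic rearrangement then shows $\tfrac{1}{\alpha-1}\log\tfrac{e^{\alpha\epsilon}+e^{(1-\alpha)\epsilon}}{1+e^{\epsilon}}=\epsilon-\tfrac{1}{\alpha-1}\log\tfrac{1+e^{-\epsilon}}{1+e^{-(2\alpha-1)\epsilon}}=\hat\epsilon(\alpha)$. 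Since this bound is uniform over $M$, $\rho$ and $\sigma$, the mechanism is $(\alpha,\hat\epsilon(\alpha))$-QRDP, which together with the three values of $\epsilon$ from the first piece establishes the claim.

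I expect the first piece to be the main obstacle: pinning down the \emph{exact} worst-case ratio $e^{\epsilon}$ for each channel rather than a loose overestimate. A careless bound there would only recover the trivial conversion and miss the sharper $\hat\epsilon(\alpha)\le\epsilon$; in particular the factor $2d$ for the damping channels and the $dD$ (not $D$) for the depolarizing channel must be justified by choosing the extremal input so that the denominator term $\mathrm{Tr}(M_m\sigma'')$ vanishes while $\mathrm{Tr}(M_m\rho'')$ is as large as the trace-distance budget permits. The convexity argument in the second piece is routine but should be stated with a brief reduction to finitely supported distributions, observing that transporting probability mass toward the endpoints $e^{\pm\epsilon}$ while holding the mean fixed can only increase $\mathbb{E}[r^{\alpha}]$.
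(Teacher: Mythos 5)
Your proposal is correct and follows essentially the same route as the paper: the paper also decomposes the argument into (i) the channel-specific pure $\epsilon$-QDP budgets, which it cites from Zhou et al.'s Theorems 1--3, and (ii) the generic pure-DP-to-R\'enyi conversion $\hat\epsilon(\alpha)=\frac{1}{\alpha-1}\log\bigl(\frac{1}{e^{\epsilon}+1}e^{\alpha\epsilon}+\frac{e^{\epsilon}}{e^{\epsilon}+1}e^{-\alpha\epsilon}\bigr)$, which it cites as a known tight classical bound and transfers to the quantum setting by observing that the post-measurement outcomes are ordinary probability distributions. The only difference is that you open both black boxes --- re-deriving the worst-case likelihood ratios for the three channels and giving the two-point extremal-distribution convexity argument for the conversion --- whereas the paper simply invokes the corresponding references; your endpoint formulas and the algebraic identification with $\hat\epsilon(\alpha)$ check out.
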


By Theorem~\ref{intuitiveqrdp}, we find that the privacy budget of the QRDP decreases when the quantum noise parameter increases, and this property is independent of the input quantum state. We will illustrate this trend in Section~\ref{sec:simulation}.

\begin{proof} 
First, Ref.~\cite{DPorg-pdp-to-zcdp} gives a way for calculating RDP in classical computing. Let $M: D_{n}\mapsto R$ be a randomized algorithm satisfying $\epsilon$-differential privacy. Then $M$ satisfies $\hat{\epsilon }(\alpha ) $-Rényi DP for all $\alpha > 1$,
\begin{equation}
\begin{split}
   \hat{\epsilon}(\alpha) &=\frac{1}{\alpha-1} \log \left(\frac{1}{e^{\epsilon}+1} e^{\alpha \epsilon}+\frac{e^{\epsilon}}{e^{\epsilon}+1} e^{-\alpha \epsilon}\right) \\
   &=\epsilon-\frac{1}{\alpha-1} \log \left(\frac{1+e^{-\epsilon}}{1+e^{-(2 \alpha-1) \epsilon}}\right),\\
   \label{dprdp}
\end{split}
\end{equation}
where $\epsilon$ is the privacy budget obtained using DP in the corresponding classical noise mechanism. This bound is tight. Equation (\ref{dprdp}) is the Theorem 4 from~\cite{DPorg-pdp-to-zcdp} and the detailed proof is provided in~\cite{DPorg-pdp-to-zcdp}. Equation (\ref{qdpqrdp}) builds on Eq.(\ref{dprdp}). Since our definitions of $\mathcal{E}_{m} (\rho)$ and $\mathcal{E}_{m} (\sigma)$ can be seen as two special probability distributions, the similar theorem also holds in the quantum scenario. The difference is that $\epsilon$ of Eq.(\ref{qdpqrdp}) are referred to QDP under different quantum noise mechanisms. Reference~\cite{zhou2017differential} derived the $\epsilon$ under the three noise mechanisms (Theorems 1, 2, and 3), e.g. under the depolarizing mechanism, $\epsilon=\ln_{}{\left [ 1+\frac{1-p}{p}dD \right ] }$, so we can substitute directly into Eq.(\ref{qdpqrdp}). Then, $\hat{\epsilon }(\alpha ) $ is independent of the quantum circuit, simplifying the calculation.
\end{proof}

\subsection{Data utility under different noise mechanisms}
One advantage of QRDP is that it balances data privacy and data utility. Therefore,  we must pay attention to the relationship between privacy budget and data utility under different noise mechanisms. We reflect data utility through fidelity, which provides a formula to quantify the degree of similarity of a pair of quantum states~\cite{liang2019quantum}. Since qubits are susceptible to noise, the researchers always use fidelity to reflect the closeness of the actual state to the expected state. The most widely used is the Schumacher fidelity~\cite{jozsa1994fidelity,uhlmann1976transition}, which can represent the maximum transfer probability between a pair of density matrices $\rho$ and $\sigma$ purity. The Schumacher's fidelity is calculated as
\begin{equation}
    \mathcal{F}(\rho, \sigma) := \max_{\{|\psi\rangle, |\phi\rangle\}} |\langle \psi | \phi \rangle|^2 = \left(\mathrm{Tr}\left(\sqrt{\sqrt{\rho} \sigma \sqrt{\rho}}\right)\right)^2.
\end{equation}

We use this fidelity to reflect the degree of similarity between the quantum state $\rho'$ without noise and the quantum state $\rho''$ with noise, thus reflecting the utility of the data. The closer the fidelity is to 1, the more similar the noisy quantum state is to the noiseless quantum state, indicating better data utility. The formula is
\begin{equation}
    \mathcal{F}(\rho', \rho'') := \max_{\{|\psi\rangle, |\phi\rangle\}} |\langle \psi | \phi \rangle|^2 = \left(\mathrm{Tr}\left(\sqrt{\sqrt{\rho'} \rho'' \sqrt{\rho'}}\right)\right)^2.
    \label{fid}
\end{equation}

We can now demonstrate the relationship between fidelity and the quantum noise parameter in the following theorem.

\begin{theorem}[\bf{Fidelity Trends under Different Noise Mechanisms}\label{fidelitytrends}]
Let $\rho' = \begin{bmatrix} a & b \\ b^* & c \end{bmatrix}$ be a $2\times 2$ quantum density matrix (with $a+c=1$ and $b \in \mathbb{C}$) and let $\rho''$ be the density matrix under GAD, PAD, Dep noise mechanisms (the specific forms of $\rho''$ can be found in Section 4.1-4.3). The fidelity $\mathcal{F}(\rho', \rho'')$ between $\rho'$ and $\rho''$ is a decreasing function of the quantum noise parameters.
\end{theorem}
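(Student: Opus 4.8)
The plan is to collapse the three cases into a single monotonicity lemma by working in the Bloch parametrization. Write the noiseless single-qubit state as $\rho' = \tfrac12\bigl(I + r_x X + r_y Y + r_z Z\bigr)$ with Bloch vector $\vec r=(r_x,r_y,r_z)$, $|\vec r|\le 1$ (so $a=\tfrac12(1+r_z)$, $b=\tfrac12(r_x-i r_y)$, $c=1-a$). Reading off the explicit matrices in Sections~4.1--4.3, one checks that in every case the noisy state $\rho''$ has Bloch vector $\vec s = (f\,r_x,\,f\,r_y,\,g\,r_z)$ with $f,g\in[0,1]$: namely $f=\sqrt{1-\gamma}$, $g=1-\gamma$ for GAD (with $p=\tfrac12$); $f=\sqrt{(1-\gamma)(1-\lambda)}$, $g=1-\gamma$ for PAD; and $f=g=1-p$ for the depolarizing mechanism (with $D=2$). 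In each case $f$ and $g$ are nonincreasing in every noise parameter. Next I would invoke the standard closed form of Schumacher's fidelity for qubits, which follows from Eq.~(\ref{fid}) together with $\mathrm{Tr}(\rho'\rho'')=\tfrac12(1+\vec r\cdot\vec s)$ and $\det\rho'=\tfrac14(1-|\vec r|^2)$:
\[ \mathcal F(\rho',\rho'') \;=\; \mathrm{Tr}(\rho'\rho'') + 2\sqrt{\det\rho'\,\det\rho''}\;=\;\tfrac12\Bigl(1 + \vec r\cdot\vec s + \sqrt{(1-|\vec r|^2)(1-|\vec s|^2)}\Bigr). \]

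It then suffices to prove the following lemma: with $u:=r_x^2+r_y^2\ge 0$ and $v:=r_z^2\ge 0$ (so $u+v=|\vec r|^2\le 1$), the function
\[ G(f,g)\;:=\;f u + g v + \sqrt{(1-u-v)\,(1-f^2 u - g^2 v)} \]
is nondecreasing in $f$ and in $g$ on $[0,1]^2$; since $\mathcal F=\tfrac12(1+G)$ and each noise parameter only decreases $f$ and $g$, the theorem follows. Differentiating, $\partial_f G = u\Bigl(1 - \frac{f\sqrt{1-u-v}}{\sqrt{1-f^2u-g^2v}}\Bigr)$ and likewise $\partial_g G = v\Bigl(1 - \frac{g\sqrt{1-u-v}}{\sqrt{1-f^2u-g^2v}}\Bigr)$, so it is enough to verify $f^2(1-u-v)\le 1-f^2u-g^2v$ and $g^2(1-u-v)\le 1-f^2u-g^2v$. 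The first rearranges to $(1-f^2)+(f^2-g^2)v\ge 0$: if $f^2\ge g^2$ both terms are nonnegative, and if $f^2<g^2$ then $v\le 1$ gives $(f^2-g^2)v\ge f^2-g^2$, so the left side is $\ge 1-g^2\ge 0$; the second inequality is symmetric under $(f,u)\leftrightarrow(g,v)$. This proves the lemma, hence the theorem, with strict decrease whenever $\vec r\ne 0$, i.e.\ whenever $\rho'$ is not the maximally mixed state.

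The one delicate point is that the two summands of $G$ move in opposite directions as noise grows: the overlap term $fu+gv$ decreases, but $1-|\vec s|^2$ increases because the noisy state becomes more mixed, so a term-by-term argument cannot work. The derivative computation above resolves this by showing the ratio $f\sqrt{1-|\vec r|^2}\,/\,\sqrt{1-|\vec s|^2}$ never exceeds $1$ on $[0,1]^2$; intuitively, contracting a Bloch vector can only shrink this ratio since $|\vec s|\le|\vec r|$ forces $1-|\vec s|^2\ge 1-|\vec r|^2$. Everything else — computing the Bloch action of each Kraus representation in Sections~4.1--4.3 and invoking the qubit fidelity formula — is routine, so this sign analysis is the only step that needs care.
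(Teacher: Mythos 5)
Your proposal is correct, and while it starts from the same place as the paper's appendix --- the Bloch-vector form of the qubit fidelity, $\mathcal F(\rho',\rho'')=\tfrac12\bigl(1+\vec r\cdot\vec s+\sqrt{(1-|\vec r|^2)(1-|\vec s|^2)}\bigr)$ --- the decisive step is genuinely different and noticeably more rigorous. The paper works out only the GAD case in detail, substitutes $t=\sqrt{1-\gamma}$, and then argues qualitatively that the negative part of $dF/dt$ ``grows faster'' than the positive part; the PAD and depolarizing cases are dispatched by analogy. That dominance claim is never quantified, and as written the sign bookkeeping is off: the paper asserts $dF/dt\le 0$ and $dt/d\gamma\le 0$ and concludes $dF/d\gamma\le 0$, whereas the product of two nonpositive derivatives is nonnegative --- the statement your argument actually establishes is that $F$ is nondecreasing in $t=\sqrt{1-\gamma}$, which combined with $dt/d\gamma<0$ gives the desired monotonicity. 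Your route instead observes that all three channels act on the Bloch vector as a diagonal contraction $(r_x,r_y,r_z)\mapsto(fr_x,fr_y,gr_z)$ with $f,g\in[0,1]$ nonincreasing in every noise parameter, and reduces the whole theorem to one algebraic lemma: $G(f,g)=fu+gv+\sqrt{(1-u-v)(1-f^2u-g^2v)}$ is nondecreasing on $[0,1]^2$. Your verification via $(1-f^2)+(f^2-g^2)v\ge 0$ (and its $(f,u)\leftrightarrow(g,v)$ mirror) is exact and checks out, and it correctly isolates the only delicate point, namely that the overlap term falls while the purity term $1-|\vec s|^2$ rises. What your approach buys is a uniform, fully rigorous treatment of GAD, PAD and depolarizing noise in one stroke (and of any other diagonal Bloch-contractive channel), together with correct handling of complex off-diagonal $b$, which the paper's $\vec r=(2b,0,2a-1)$ implicitly assumes real. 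Two small polish items: the boundary case $1-f^2u-g^2v=0$ (possible only at $f=g=1$ with $\rho'$ pure) should be settled from the undifferentiated expression $G=fu+gv$ rather than the derivative; and the claim of strict decrease ``whenever $\vec r\ne 0$'' needs the qualifier that the parameter in question must act on a nonzero component --- e.g.\ $\lambda$ enters only through $f$, so $\mathcal F$ is constant in $\lambda$ when $b=0$ even if $r_z\ne 0$. Neither affects the theorem as stated.
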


The detailed proof can be found in the Appendix. Theorem~\ref{fidelitytrends} illustrates that as the parameter of the quantum noise increases, the fidelity decreases, independent of the input quantum state. We will use this theorem in Section~\ref{sec:simulation} to represent the relationship between data utility and data privacy.

\section{Quantum RDP and $(\epsilon ,\delta )$-QDP\label{sec:QRDPqdp}}
In classical computing, the privacy budget obtained by the DP composition is too loose to reflect an accurate privacy-preserving capability; while RDP composition provides a much tighter total privacy budget~\cite{murtagh2015complexity}. However, it also makes RDP lack operational interpretation, i.e., not as intuitive and easy to understand as DP~\cite{balle2020hypothesis,asoodeh2021three}. Thus Proposition 3 in~\cite{mironov2017renyi} provides a transformation from $(\alpha,\epsilon)$-RDP to $(\epsilon,\delta)$-DP while retaining the advantages of both DP and RDP. We therefore propose an extension of the quantum domain on this basis: from $(\alpha,\epsilon)$-QRDP to $(\epsilon,\delta)$-QDP.

We mentioned in Section~\ref{sec:QRDP} that when $\alpha$ equals infinity, i.e., $(\alpha,\epsilon)$-QRDP, it is equivalent to $\epsilon$-QDP. By the monotonicity of the quantum Rényi divergence, $(\infty,\epsilon)$-QRDP satisfies $(\alpha, \epsilon)$-QRDP, where $\alpha$ is finite. In turn, $(\alpha,\epsilon)$-QRDP can also satisfy $(\epsilon_{\delta},\delta)$-QDP, where $\delta> 0$. The transformation from QRDP to QDP is as follows.

\begin{proposition}[\bf{From Quantum RDP to $(\epsilon ,\delta )$-QDP}] If $\mathcal{E}_{m} (\cdot )$ is an $(\alpha , \epsilon)$-QRDP mechanism, it also satisfies $(\epsilon + \frac{log1/\delta }{\alpha -1}  ,\delta )$-quantum differential privacy for any $0< \delta < 1$.
\end{proposition}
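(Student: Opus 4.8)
The plan is to deduce this from the classical $(\alpha,\epsilon)$-RDP to $(\epsilon,\delta)$-DP conversion (Proposition 3 of~\cite{mironov2017renyi}) by applying it to the classical outcome distributions induced by an arbitrary POVM. First I would fix a POVM $M=\{M_m\}$ and an adjacent pair $\rho,\sigma$ with $\tau(\rho,\sigma)\le d$, and set $P=\mathcal{E}_m(\rho)$ and $Q=\mathcal{E}_m(\sigma)$, which are genuine probability distributions over the measurement outcomes. The QRDP hypothesis (Definition~\ref{qrdp}) then gives $D_\alpha(P\|Q)\le\epsilon$, equivalently the moment bound $\mathbb{E}_{m\sim Q}\big[(P(m)/Q(m))^{\alpha}\big]\le e^{(\alpha-1)\epsilon}$. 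The target reduces to the purely classical claim that for every outcome set $S$, $\Pr[P\in S]\le e^{\epsilon'}\Pr[Q\in S]+\delta$ with $\epsilon'=\epsilon+\tfrac{\log(1/\delta)}{\alpha-1}$; establishing this for all $M$, $\rho$, $\sigma$ is exactly $(\epsilon',\delta)$-QDP by the definition of QDP (Definition~\ref{qdp}), since that definition also quantifies over all POVMs and all adjacent pairs.

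To prove the classical claim I would introduce the ``bad set'' $B=\{m:P(m)>e^{\epsilon'}Q(m)\}$ and split $\Pr[P\in S]=\Pr[P\in S\setminus B]+\Pr[P\in S\cap B]\le e^{\epsilon'}\Pr[Q\in S]+\Pr[P\in B]$, so it suffices to show $\Pr[P\in B]\le\delta$. For this I would apply Hölder's inequality with conjugate exponents $\alpha$ and $\alpha/(\alpha-1)$ and then the moment bound:
\[
\Pr[P\in B]=\mathbb{E}_{m\sim Q}\Big[\mathbf{1}_{B}(m)\,\tfrac{P(m)}{Q(m)}\Big]\le\Big(\mathbb{E}_{m\sim Q}\big[(\tfrac{P(m)}{Q(m)})^{\alpha}\big]\Big)^{1/\alpha}\Pr[Q\in B]^{(\alpha-1)/\alpha}\le e^{(\alpha-1)\epsilon/\alpha}\Pr[Q\in B]^{(\alpha-1)/\alpha}.
\]
On $B$ we have $P(m)>e^{\epsilon'}Q(m)$ pointwise, so summing over $B$ gives $\Pr[Q\in B]<e^{-\epsilon'}\Pr[P\in B]$; substituting this in, cancelling one factor of $\Pr[P\in B]^{(\alpha-1)/\alpha}$ (the case $\Pr[P\in B]=0$ being immediate), and raising to the $\alpha$-th power yields $\Pr[P\in B]\le e^{(\alpha-1)(\epsilon-\epsilon')}$. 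With $\epsilon'=\epsilon+\tfrac{\log(1/\delta)}{\alpha-1}$ the exponent is $(\alpha-1)(\epsilon-\epsilon')=\log\delta$, hence $\Pr[P\in B]\le\delta$, which closes the argument.

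I do not expect a substantive obstacle: the content is the classical Hölder computation plus the observation that it transfers \emph{verbatim} once a measurement is fixed. The only points needing care are (i) using the correct conjugate exponents in Hölder and tracking the algebra through the cancellation, and (ii) verifying the quantifier bookkeeping --- that ``QRDP for every POVM'' is precisely the hypothesis that lets the classical bound hold for each induced distribution, and that ``QDP for every POVM'' is precisely what we then obtain. No genuinely quantum feature (non-commutativity, entanglement) plays a role, because the whole argument lives at the level of classical measurement-outcome distributions.
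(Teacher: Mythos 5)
Your proposal is correct and follows essentially the same route as the paper: reduce to the classical $(\alpha,\epsilon)$-RDP $\to(\epsilon,\delta)$-DP conversion by fixing an arbitrary POVM and applying it to the induced outcome distributions $\mathcal{E}_m(\rho)$ and $\mathcal{E}_m(\sigma)$. The only difference is that the paper simply cites Proposition 3 of Mironov for the classical step, whereas you reprove it in full via the bad-set decomposition and H\"older's inequality --- your version is self-contained but not a different argument.
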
 

\begin{proof} 
Proposition 3 in~\cite{mironov2017renyi} shows that if $f$ is an $(\alpha , \epsilon)$-RDP mechanism, it also satisfies $(\epsilon + \frac{log1/\delta }{\alpha -1}  ,\delta )$-differential privacy for any $0< \delta < 1$. $f$ refers to the general probability distribution. It also provides detailed proof. Since our definitions of $\mathcal{E}_{m} (\cdot)$ can be seen as a special probability distribution applicable in the quantum setting, conversion from RDP and $(\epsilon ,\delta )$-DP also holds in quantum computing.
\end{proof}

\section{Performance Evaluation\label{sec:simulation}}
{We study the relationship between the quantum noise parameters and the $(\alpha,\epsilon)$-QRDP under different quantum noise mechanisms. As analyzed above, we have presented two ways for computing the QRDP. In our simulations, we choose Theorem 5 to calculate the privacy budget. The reason is that Theorem 5 is independent of specific quantum states and quantum measurements, and is only related to the quantum noise parameters, which makes it more suitable for analyzing the impacts of quantum noises on the QRDP. That is similar to the approach used in~\cite{zhou2017differential}. The trace distance $d$ is fixed as long as the dataset and the encoding method are determined. 

\subsection{$(\alpha,\epsilon)$-QRDP under different mechanisms.}

According to Theorems 1, 2, 3 mentioned in~\cite{zhou2017differential}, the privacy budget for QDP is independent of quantum circuits and quantum measurements. Under the generalized amplitude damping mechanism, p is fixed ($p=0.5$), when $\gamma$ increases, $\epsilon$ decreases; under the composition of phase and amplitude damping mechanism, p is fixed ($p=0.5$), when $\gamma$ or $\lambda$ increases, $\epsilon$ decreases; under the depolarizing mechanism, when $p$ increases, $\epsilon$ decreases~\cite{zhou2017differential}. Therefore, for QRDP, we will investigate whether the relationship between $\epsilon(\alpha)$ and these parameters change for these three noise mechanisms under the same conditions as~\cite{zhou2017differential}. We assume the trace distance between two neighboring quantum states $\rho$ and $\sigma$ is $d = 0.1$.

\textbf{Generalized amplitude damping.}
According to and Eq.(\ref{gad}) and (\ref{qdpqrdp}), under the GAD mechanism, $(\alpha,\epsilon)$-QRDP is related to two parameters, $p$ and $\gamma$. To simplex the mechanism as~\cite{zhou2017differential}, we assume that $p=0.5$. Figure~\ref{fig:gad_gamma} shows that when $\gamma$ increases, the privacy budget decreases, leading to stronger privacy protection. $\gamma$ increase indicates that the environmental interference becomes stronger, making it harder to extract information from the system. In addition, it also shows that the privacy budget increases as $\alpha$ increases. $\alpha$ increase indicates that the model focuses more on the most extreme cases, i.e. the worst case for privacy breach.

\begin{figure}[t]
    \centering  \includegraphics[width=0.45\textwidth]{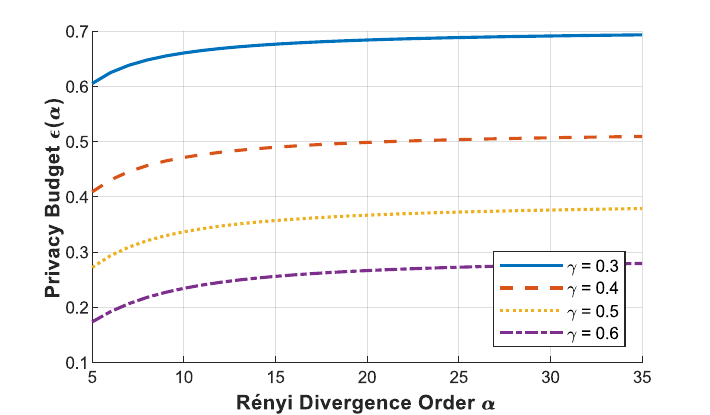}
    \caption{Relationship between $\epsilon$ and $\alpha$ under generalized amplitude damping $\gamma$.}
    \label{fig:gad_gamma}
    \vspace{-2mm}
\end{figure}



\textbf{Composition of phase and amplitude damping.}
According to Eq.(\ref{pad}) and (\ref{qdpqrdp}), under the PAD mechanism, $(\alpha,\epsilon)$-QRDP is related to three parameters, $p$, $\gamma$ and $\lambda$. To simplex the mechanism as~\cite{zhou2017differential}, we assume that $p=0.5$. Throughout the simulation, the circuit parameters are set to $\gamma = 0.3$ and $\lambda=0.2$ unless otherwise stated. All parameters take values between 0 and 1. Figure~\ref{fig:pad_gamma} shows that when $\gamma$ increases, the privacy budget decreases and privacy protection becomes better. $\gamma$ increase indicates that the environmental interference becomes stronger. Figure~\ref{fig:pad_lambda} shows that when $\lambda$ increases, the privacy budget also decreases, resulting in enhanced privacy protection. $\lambda$ increase indicates that the quantum state loses more phase information and introduce more noise. In addition, these two graphs also show that the privacy budget increases as $\alpha$ increases.

\begin{figure}[t]
    \centering
    \includegraphics[width=0.45\textwidth]{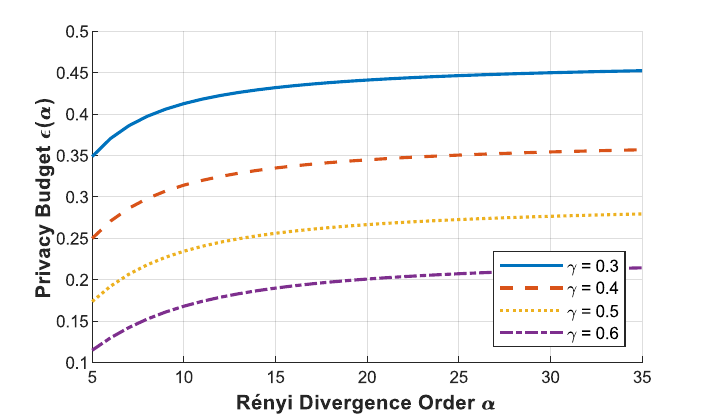}
    \caption{Relationship between $\epsilon$ and $\alpha$ under composition of phase and amplitude damping $\gamma$.}
    \label{fig:pad_gamma}
    \vspace{-2mm}
\end{figure}

\begin{figure}[t]
    \centering
    \includegraphics[width=0.45\textwidth]{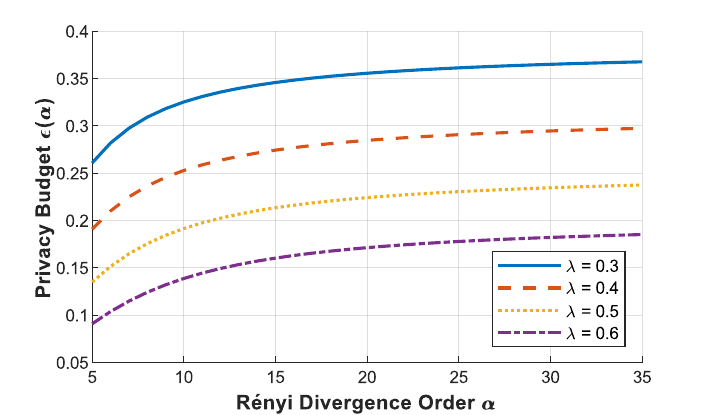}
    \caption{Relationship between $\epsilon$ and $\alpha$ under composition of phase and amplitude damping $\lambda$.}
    \label{fig:pad_lambda}
     \vspace{-2mm}
\end{figure}

\textbf{Depolarizing mechanism.}
According to Eq.(\ref{dep}) and (\ref{qdpqrdp}), under the Dep mechanism, when the quantum dataset and circuit are determined, $D$ is fixed ($D=2$ for the single qubit input), so $(\alpha,\epsilon)$-QRDP is only related to $p$. Figure~\ref{fig:dep_p} shows that when $p$ increases, the privacy budget decreases, further improving privacy protection. $p$ increase indicates that quantum states are more likely to become fully mixed states. A fully mixed state is a state with minimum information, so quantum states have minimum differences and reduce information leakage. In addition, it also shows that the privacy budget increases as $\alpha$ increases.

\begin{figure}[t]
    \centering
    \includegraphics[width=0.45\textwidth]{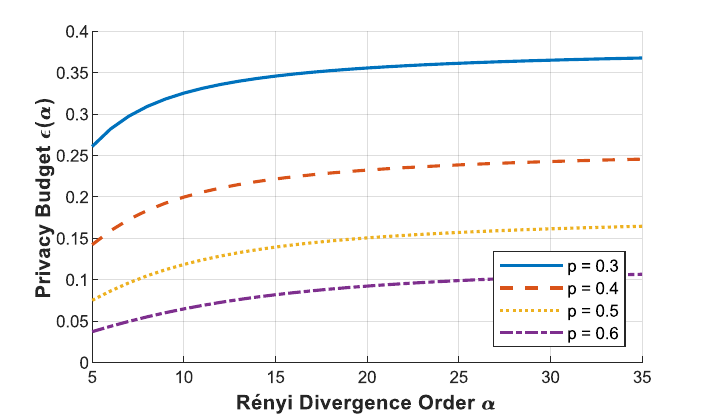}
    \caption{Relationship between $\epsilon$ and $\alpha$ under depolarizing mechanism $p$.}
    \label{fig:dep_p}
     \vspace{-2mm}
\end{figure}

\subsection{Fidelity under different mechanisms.}
Based on Theorems~\ref{intuitiveqrdp} and~\ref{fidelitytrends}, we rigorously demonstrate that the trends in QRDP and fidelity are independent of quantum states. Therefore we randomly take two quantum states $\rho'$ and $\sigma'$ to calculate the corresponding fidelity and privacy budgets,
\[
\rho' = \begin{bmatrix} 0.3 & 0.2 \\ 0.2 & 0.7 \end{bmatrix}, \quad
\sigma' = \begin{bmatrix} 0.4 & 0.1 \\ 0.1 & 0.6 \end{bmatrix}.
\]
Their trace distance is approximately $d=0.14$. Then we add different quantum noises to $\rho'$ and $\sigma'$ to realize QRDP and we get $\rho''$ and $\sigma''$.

\textbf{Generalized amplitude damping.}
According to Eq.(\ref{gadeps}) and (\ref{fid}), under the GAD mechanism, fidelity and $(\alpha,\epsilon)$-QRDP is related to two parameters, $p$ and $\gamma$. To simplex the mechanism as~\cite{zhou2017differential}, we assume that $p=0.5$. Figure~\ref{fig:fid_gad_gamma} shows that when $\gamma$ increases, the privacy budget and fidelity decrease. Increasing $\gamma$ introduces more noise, which improves privacy protection but diminishes data utility. In addition, it also shows that the privacy budget increases as $\alpha$ increases.

\begin{figure}[t]
    \centering
    \includegraphics[width=0.45\textwidth]{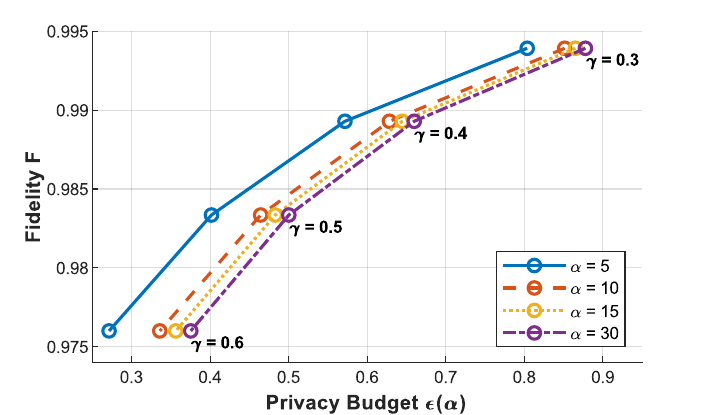}
    \caption{Relationship between $\epsilon$ and $F$ under generalized amplitude damping $\gamma$.}
    \label{fig:fid_gad_gamma}
    \vspace{-2mm}
\end{figure}



\textbf{Composition of phase and amplitude damping.}
According to Eq.(\ref{padeps}) and (\ref{fid}), under the GAD mechanism, fidelity and $(\alpha,\epsilon)$-QRDP is related to three parameters, $p$, $\gamma$ and $\lambda$. To simplex the mechanism as~\cite{zhou2017differential}, we assume that $p=0.5$. Throughout the simulation, the circuit parameters are set to $\gamma = 0.3$ and $\lambda=0.2$ unless otherwise stated. All parameters take values between 0 and 1. Figure~\ref{fig:fid_pad_gamma} and~\ref{fig:fid_pad_lambda} show that when $\gamma$ or $\lambda$ increases, the privacy budget and fidelity decrease. $\gamma$ or $\lambda$ increase indicates an increase in noise and therefore better privacy protection but lower data utility. In addition, it also shows that the privacy budget increases as $\alpha$ increases.

\begin{figure}[t]
    \centering
    \includegraphics[width=0.45\textwidth]{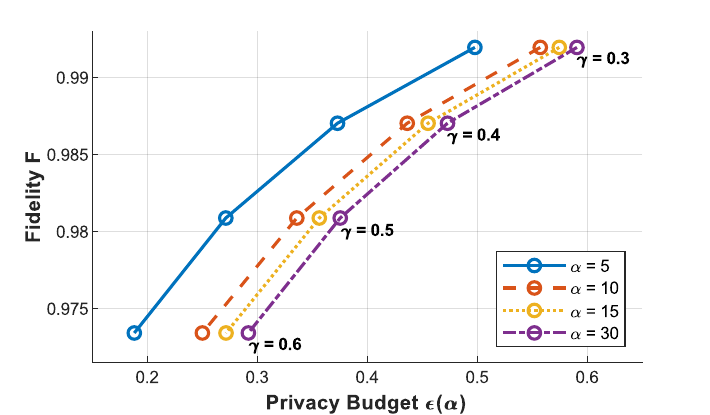}
    \caption{Relationship between $\epsilon$ and $F$ under composition of phase and damping $\gamma$.}
    \label{fig:fid_pad_gamma}
    \vspace{-2mm}
\end{figure}

\begin{figure}[t]
    \centering
    \includegraphics[width=0.45\textwidth]{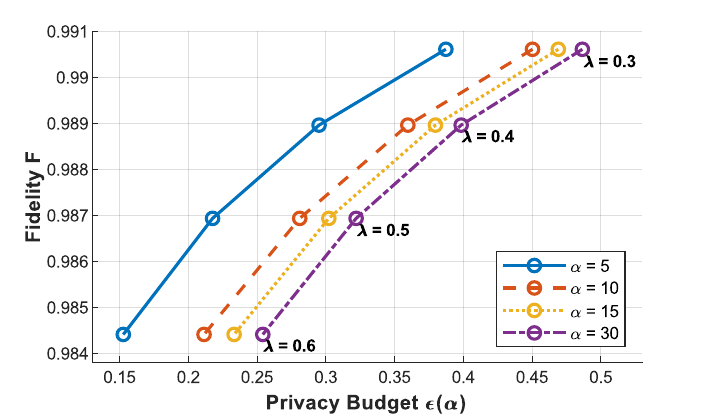}
    \caption{Relationship between $\epsilon$ and $F$ under composition of phase and damping $\lambda$.}
    \label{fig:fid_pad_lambda}
    \vspace{-2mm}
\end{figure}

\textbf{Depolarizing mechanism.}
According to Eq.(\ref{depeps}) and (\ref{fid}), under the Dep mechanism, when the quantum dataset and circuit are determined, $D$ is fixed ($D=2$ for the single qubit input), so $(\alpha,\epsilon)$-QRDP and fidelity are only related to $p$. Figure~\ref{fig:fid_dep_p} shows that when $p$ increases, the privacy budget and fidelity decrease. With an increase in $p$, noise also increases, thus improving privacy protection while decreasing data utility. In addition, it also shows that the privacy budget increases as $\alpha$ increases.

\begin{figure}[t]
    \centering
    \includegraphics[width=0.45\textwidth]{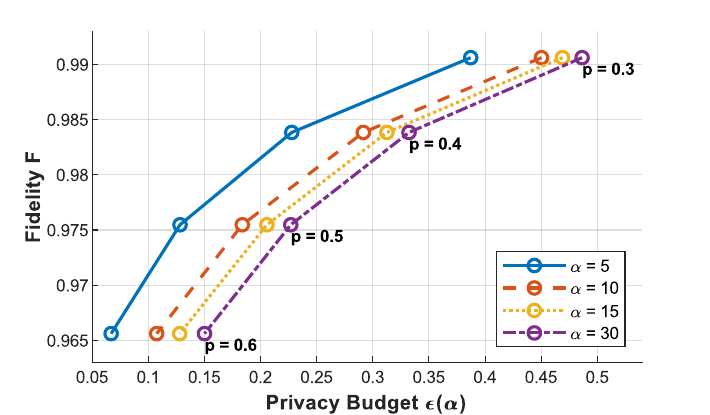}
    \caption{Relationship between $\epsilon$ and $F$ under depolarizing mechanism $p$.}
    \label{fig:fid_dep_p}
    \vspace{-2mm}
\end{figure}

\section{Discussion\label{sec:discussion}}
\subsection{
Rationale for the proposed quantum Rényi divergence
}
In preparation for QRDP, we found that quantum Rényi divergence has already been proposed in~\cite{tomamichel2015quantum} from the perspective of quantum information theory. The formula is
\begin{equation}
    D_{\alpha}(\rho \|\sigma) = \frac{1}{\alpha - 1} \log \text{Tr}(\rho^{\alpha} \sigma^{\alpha-1}),
    \label{oriqrd}
\end{equation}
which is used to measure the difference between two quantum states $\rho$ and $\sigma$, and has been widely used in various fields such as entropy measurement~\cite{goldfeld2024quantum} and quantum cryptography~\cite{pirandola2020advances}. However, when we aim to propose QRDP, we find that Eq.(\ref{oriqrd}) is not suitable for defining QRDP directly. The core of classical DP is to preserve privacy by comparing the difference between neighboring inputs in the distribution of outputs. Quantum states themselves do not have probability distributions in the classical sense, so directly comparing the difference between two quantum states cannot effectively reflect the risk of privacy leakage after quantum operation. For this reason, we propose a new quantum Rényi divergence in Eq.(\ref{qrdivergence}), which focuses on the difference in the distribution of the outcome of quantum states after multiple measurements. Equation (\ref{qrdivergence}) has the additional step of quantum measurements on quantum states over Eq.(\ref{oriqrd}). This approach is similar to the analysis of the output distribution in classical DP. Therefore, Eq.(\ref{qrdivergence}) is more suitable as a basis for the definition of QRDP.

\subsection{Quantum inherent noise to realize QRDP}
A common way to implement QDP or QRDP is to add quantum noise to the model. Existing work has found that the inherent noise during quantum computing can be used as the noise source~\cite{zhou2017differential, guan2023detecting, hirche2023quantum}. This is a promising direction as noise is always considered harmful, but now it provides a certain degree of privacy protection. Since the inherent noise is fixed, the privacy budget is also fixed. Recent research has found some ways to adjust the amount of inherent noise to meet the noise requirements of different QDP in quantum computing~\cite{ju2024quantum,li2024differential,ju2024privacy,ju2024harnessing,zhong2023tuning}.

Quantum error correction (QEC) is used to reduce the noise error rate, and thus reduce the total amount of noise. This approach is suitable for DP calculation related to noise error rate to achieve an adjustable degree of privacy protection, such as the \textit{Theorem 3} from~\cite{zhou2017differential}. Zhong et al.~\cite{zhong2023tuning} calculated the privacy budget $\epsilon$ under the influence of depolarizing noise based on this theorem. They used the Steane code as an example to obtain different noise error rates by varying the position of QEC in the circuit. Different simulation results show that QEC is an effective way to achieve adjustable QDP.

Differing from QEC, quantum error mitigation (QEM) performs post-processing using data from the noisy quantum algorithm. Probabilistic error cancellation (PEC)~\cite{temme2017error}, one of QEM methods, uses physical properties to counteract or minimize the errors of qubits, thereby improving the stability and reliability of quantum systems. A recent work~\cite{ju2024harnessing} used PEC to reduce deviation from the noise and improve the accuracy of the circuit output. They achieved different levels of privacy protection by varying the number of circuit executions or the error rate of depolarizing noise.

Tunable privacy budgets are a new direction for QDP, but the limited number of ways to artificially alter the amount of noise currently limits the scope of privacy budgets. We believe that this can be applied to QRDP as well, since inherent noise is introduced during each QPU operation. Each of these inherent noises leads to a certain degree of privacy preservation, thus allowing QRDP to preserve privacy while maintaining high fidelity. The mechanisms by which quantum inherent noise enables variations in QDP or QRDP remain areas of ongoing exploration.

\section{Conclusion\label{sec:conclusion}}
In this paper, we have novelly developed quantum Rényi differential privacy (QRDP) based on the quantum Rényi divergence. Compared with QDP, we make QRDP adjustable between $\epsilon$-QDP and ($\epsilon$,$\delta$)-QDP by introducing the parameter $\alpha$. QRDP also inherits post-processing and the composition properties, enabling more precise control over the cumulative privacy budget after multiple quantum operations particularly for QDC. We also analyze a variety of noise mechanisms for realizing QRDP, and derive the formulas to calculate the lowest privacy budget under different noise mechanisms. Our numerical simulation results and analysis illustrate the impacts of different quantum noise parameters and the parameter $\alpha$ on the privacy budget. The results also show that when the noise increases, the data protection is stronger while the data utility is weaker. Consequently, the appropriate quantum noise parameter should be selected, in order to achieve the balance between data privacy and data utility.

\bibliographystyle{ACM-Reference-Format}
\bibliography{ref}


\begin{thebibliography}{43}


\ifx \showCODEN    \undefined \def \showCODEN     #1{\unskip}     \fi
\ifx \showDOI      \undefined \def \showDOI       #1{#1}\fi
\ifx \showISBNx    \undefined \def \showISBNx     #1{\unskip}     \fi
\ifx \showISBNxiii \undefined \def \showISBNxiii  #1{\unskip}     \fi
\ifx \showISSN     \undefined \def \showISSN      #1{\unskip}     \fi
\ifx \showLCCN     \undefined \def \showLCCN      #1{\unskip}     \fi
\ifx \shownote     \undefined \def \shownote      #1{#1}          \fi
\ifx \showarticletitle \undefined \def \showarticletitle #1{#1}   \fi
\ifx \showURL      \undefined \def \showURL       {\relax}        \fi
\providecommand\bibfield[2]{#2}
\providecommand\bibinfo[2]{#2}
\providecommand\natexlab[1]{#1}
\providecommand\showeprint[2][]{arXiv:#2}

\bibitem[Aaronson and Rothblum(2019)]%
        {aaronson2019gentle}
\bibfield{author}{\bibinfo{person}{Scott Aaronson} {and} \bibinfo{person}{Guy~N Rothblum}.} \bibinfo{year}{2019}\natexlab{}.
\newblock \showarticletitle{Gentle measurement of quantum states and differential privacy}. In \bibinfo{booktitle}{\emph{Proceedings of the 51st Annual ACM SIGACT Symposium on Theory of Computing}}. \bibinfo{pages}{322--333}.
\newblock


\bibitem[Asoodeh et~al\mbox{.}(2021)]%
        {asoodeh2021three}
\bibfield{author}{\bibinfo{person}{Shahab Asoodeh}, \bibinfo{person}{Jiachun Liao}, \bibinfo{person}{Flavio~P Calmon}, \bibinfo{person}{Oliver Kosut}, {and} \bibinfo{person}{Lalitha Sankar}.} \bibinfo{year}{2021}\natexlab{}.
\newblock \showarticletitle{Three variants of differential privacy: Lossless conversion and applications}.
\newblock \bibinfo{journal}{\emph{IEEE Journal on Selected Areas in Information Theory}} \bibinfo{volume}{2}, \bibinfo{number}{1} (\bibinfo{date}{January} \bibinfo{year}{2021}), \bibinfo{pages}{208--222}.
\newblock


\bibitem[Aumasson(2017)]%
        {aumasson2017impact}
\bibfield{author}{\bibinfo{person}{Jean-Philippe Aumasson}.} \bibinfo{year}{2017}\natexlab{}.
\newblock \showarticletitle{The impact of quantum computing on cryptography}.
\newblock \bibinfo{journal}{\emph{Computer Fraud \& Security}} \bibinfo{volume}{2017}, \bibinfo{number}{6} (\bibinfo{date}{June} \bibinfo{year}{2017}), \bibinfo{pages}{8--11}.
\newblock


\bibitem[Balle et~al\mbox{.}(2020)]%
        {balle2020hypothesis}
\bibfield{author}{\bibinfo{person}{Borja Balle}, \bibinfo{person}{Gilles Barthe}, \bibinfo{person}{Marco Gaboardi}, \bibinfo{person}{Justin Hsu}, {and} \bibinfo{person}{Tetsuya Sato}.} \bibinfo{year}{2020}\natexlab{}.
\newblock \showarticletitle{Hypothesis testing interpretations and renyi differential privacy}. In \bibinfo{booktitle}{\emph{International Conference on Artificial Intelligence and Statistics}}. \bibinfo{pages}{2496--2506}.
\newblock


\bibitem[Barral et~al\mbox{.}(2024)]%
        {barral2024review}
\bibfield{author}{\bibinfo{person}{David Barral}, \bibinfo{person}{F~Javier Cardama}, \bibinfo{person}{Guillermo D{\'\i}az}, \bibinfo{person}{Daniel Fa{\'\i}lde}, \bibinfo{person}{Iago~F Llovo}, \bibinfo{person}{Mariamo~Mussa Juane}, \bibinfo{person}{Jorge V{\'a}zquez-P{\'e}rez}, \bibinfo{person}{Juan Villasuso}, \bibinfo{person}{C{\'e}sar Pi{\~n}eiro}, \bibinfo{person}{Natalia Costas}, {et~al\mbox{.}}} \bibinfo{year}{2024}\natexlab{}.
\newblock \showarticletitle{Review of distributed quantum computing. from single qpu to high performance quantum computing}.
\newblock \bibinfo{journal}{\emph{arXiv preprint arXiv:2404.01265}} (\bibinfo{date}{April} \bibinfo{year}{2024}).
\newblock


\bibitem[Cafaro and van Loock(2014)]%
        {cafaro2014approximate}
\bibfield{author}{\bibinfo{person}{Carlo Cafaro} {and} \bibinfo{person}{Peter van Loock}.} \bibinfo{year}{2014}\natexlab{}.
\newblock \showarticletitle{Approximate quantum error correction for generalized amplitude-damping errors}.
\newblock \bibinfo{journal}{\emph{Physical Review A}} \bibinfo{volume}{89}, \bibinfo{number}{2} (\bibinfo{date}{February} \bibinfo{year}{2014}), \bibinfo{pages}{022316}.
\newblock


\bibitem[Caleffi et~al\mbox{.}(2024)]%
        {caleffi2024distributed}
\bibfield{author}{\bibinfo{person}{Marcello Caleffi}, \bibinfo{person}{Michele Amoretti}, \bibinfo{person}{Davide Ferrari}, \bibinfo{person}{Jessica Illiano}, \bibinfo{person}{Antonio Manzalini}, {and} \bibinfo{person}{Angela~Sara Cacciapuoti}.} \bibinfo{year}{2024}\natexlab{}.
\newblock \showarticletitle{Distributed quantum computing: a survey}.
\newblock \bibinfo{journal}{\emph{Computer Networks}} (\bibinfo{date}{August} \bibinfo{year}{2024}), \bibinfo{pages}{110672}.
\newblock


\bibitem[Caleffi et~al\mbox{.}(2018)]%
        {caleffi2018quantum}
\bibfield{author}{\bibinfo{person}{Marcello Caleffi}, \bibinfo{person}{Angela~Sara Cacciapuoti}, {and} \bibinfo{person}{Giuseppe Bianchi}.} \bibinfo{year}{2018}\natexlab{}.
\newblock \showarticletitle{Quantum internet: From communication to distributed computing!}. In \bibinfo{booktitle}{\emph{Proceedings of the 5th ACM international conference on nanoscale computing and communication}}. \bibinfo{pages}{1--4}.
\newblock


\bibitem[Chen and Ungar(2002)]%
        {chen2002bloch}
\bibfield{author}{\bibinfo{person}{Jing-Ling Chen} {and} \bibinfo{person}{Abraham~A Ungar}.} \bibinfo{year}{2002}\natexlab{}.
\newblock \showarticletitle{The Bloch gyrovector}.
\newblock \bibinfo{journal}{\emph{Foundations of Physics}}  \bibinfo{volume}{32} (\bibinfo{year}{2002}), \bibinfo{pages}{531--565}.
\newblock


\bibitem[Dankar and El~Emam(2013)]%
        {dankar2013practicing}
\bibfield{author}{\bibinfo{person}{Fida~Kamal Dankar} {and} \bibinfo{person}{Khaled El~Emam}.} \bibinfo{year}{2013}\natexlab{}.
\newblock \showarticletitle{Practicing differential privacy in health care: A review.}
\newblock \bibinfo{journal}{\emph{Trans. Data Priv.}} \bibinfo{volume}{6}, \bibinfo{number}{1} (\bibinfo{year}{2013}), \bibinfo{pages}{35--67}.
\newblock


\bibitem[Denchev and Pandurangan(2008)]%
        {denchev2008distributed}
\bibfield{author}{\bibinfo{person}{Vasil~S Denchev} {and} \bibinfo{person}{Gopal Pandurangan}.} \bibinfo{year}{2008}\natexlab{}.
\newblock \showarticletitle{Distributed quantum computing: A new frontier in distributed systems or science fiction?}
\newblock \bibinfo{journal}{\emph{ACM SIGACT News}} \bibinfo{volume}{39}, \bibinfo{number}{3} (\bibinfo{date}{September} \bibinfo{year}{2008}), \bibinfo{pages}{77--95}.
\newblock


\bibitem[Du et~al\mbox{.}(2022)]%
        {du2022quantum}
\bibfield{author}{\bibinfo{person}{Yuxuan Du}, \bibinfo{person}{Min-Hsiu Hsieh}, \bibinfo{person}{Tongliang Liu}, \bibinfo{person}{Shan You}, {and} \bibinfo{person}{Dacheng Tao}.} \bibinfo{year}{2022}\natexlab{}.
\newblock \showarticletitle{Quantum differentially private sparse regression learning}.
\newblock \bibinfo{journal}{\emph{IEEE Transactions on Information Theory}} \bibinfo{volume}{68}, \bibinfo{number}{8} (\bibinfo{date}{April} \bibinfo{year}{2022}), \bibinfo{pages}{5217--5233}.
\newblock


\bibitem[Dwork(2006)]%
        {dwork2006differential}
\bibfield{author}{\bibinfo{person}{Cynthia Dwork}.} \bibinfo{year}{2006}\natexlab{}.
\newblock \showarticletitle{Differential privacy}. In \bibinfo{booktitle}{\emph{International colloquium on automata, languages, and programming}}. Springer, \bibinfo{pages}{1--12}.
\newblock


\bibitem[Dwork(2008)]%
        {dwork2008differential}
\bibfield{author}{\bibinfo{person}{Cynthia Dwork}.} \bibinfo{year}{2008}\natexlab{}.
\newblock \showarticletitle{Differential privacy: A survey of results}. In \bibinfo{booktitle}{\emph{International conference on theory and applications of models of computation}}. Springer, \bibinfo{pages}{1--19}.
\newblock


\bibitem[Elkin et~al\mbox{.}(2014)]%
        {elkin2014can}
\bibfield{author}{\bibinfo{person}{Michael Elkin}, \bibinfo{person}{Hartmut Klauck}, \bibinfo{person}{Danupon Nanongkai}, {and} \bibinfo{person}{Gopal Pandurangan}.} \bibinfo{year}{2014}\natexlab{}.
\newblock \showarticletitle{Can quantum communication speed up distributed computation?}. In \bibinfo{booktitle}{\emph{Proceedings of the 2014 ACM symposium on Principles of distributed computing}}. \bibinfo{pages}{166--175}.
\newblock


\bibitem[Ferrari et~al\mbox{.}(2023)]%
        {ferrari2023modular}
\bibfield{author}{\bibinfo{person}{Davide Ferrari}, \bibinfo{person}{Stefano Carretta}, {and} \bibinfo{person}{Michele Amoretti}.} \bibinfo{year}{2023}\natexlab{}.
\newblock \showarticletitle{A modular quantum compilation framework for distributed quantum computing}.
\newblock \bibinfo{journal}{\emph{IEEE Transactions on Quantum Engineering}} (\bibinfo{date}{August} \bibinfo{year}{2023}).
\newblock


\bibitem[Gambetta(2023)]%
        {gambe2023ibm}
\bibfield{author}{\bibinfo{person}{J. Gambetta}.} \bibinfo{year}{2023}\natexlab{}.
\newblock \showarticletitle{The hardware and software for the era of quantum utility is here}.
\newblock
\urldef\tempurl%
\url{https://www.ibm.com/quantum/blog/quantum-roadmap-2033}
\showURL{%
\tempurl}


\bibitem[Goldfeld et~al\mbox{.}(2024)]%
        {goldfeld2024quantum}
\bibfield{author}{\bibinfo{person}{Ziv Goldfeld}, \bibinfo{person}{Dhrumil Patel}, \bibinfo{person}{Sreejith Sreekumar}, {and} \bibinfo{person}{Mark~M Wilde}.} \bibinfo{year}{2024}\natexlab{}.
\newblock \showarticletitle{Quantum neural estimation of entropies}.
\newblock \bibinfo{journal}{\emph{Physical Review A}} \bibinfo{volume}{109}, \bibinfo{number}{3} (\bibinfo{date}{March} \bibinfo{year}{2024}), \bibinfo{pages}{032431}.
\newblock


\bibitem[Grimsley et~al\mbox{.}(2019)]%
        {grimsley2019adaptive}
\bibfield{author}{\bibinfo{person}{Harper~R Grimsley}, \bibinfo{person}{Sophia~E Economou}, \bibinfo{person}{Edwin Barnes}, {and} \bibinfo{person}{Nicholas~J Mayhall}.} \bibinfo{year}{2019}\natexlab{}.
\newblock \showarticletitle{An adaptive variational algorithm for exact molecular simulations on a quantum computer}.
\newblock \bibinfo{journal}{\emph{Nature communications}} \bibinfo{volume}{10}, \bibinfo{number}{1} (\bibinfo{date}{July} \bibinfo{year}{2019}), \bibinfo{pages}{3007}.
\newblock


\bibitem[Guan(2024)]%
        {guan2024optimal}
\bibfield{author}{\bibinfo{person}{Ji Guan}.} \bibinfo{year}{2024}\natexlab{}.
\newblock \showarticletitle{Optimal Mechanisms for Quantum Local Differential Privacy}.
\newblock \bibinfo{journal}{\emph{arXiv preprint arXiv:2407.13516}} (\bibinfo{date}{July} \bibinfo{year}{2024}).
\newblock


\bibitem[Guan et~al\mbox{.}(2023)]%
        {guan2023detecting}
\bibfield{author}{\bibinfo{person}{Ji Guan}, \bibinfo{person}{Wang Fang}, \bibinfo{person}{Mingyu Huang}, {and} \bibinfo{person}{Mingsheng Ying}.} \bibinfo{year}{2023}\natexlab{}.
\newblock \showarticletitle{Detecting violations of differential privacy for quantum algorithms}. In \bibinfo{booktitle}{\emph{Proceedings of the 2023 ACM SIGSAC Conference on Computer and Communications Security}}. \bibinfo{pages}{2277--2291}.
\newblock


\bibitem[Han and Kim(2000)]%
        {han2000genetic}
\bibfield{author}{\bibinfo{person}{Kuk-Hyun Han} {and} \bibinfo{person}{Jong-Hwan Kim}.} \bibinfo{year}{2000}\natexlab{}.
\newblock \showarticletitle{Genetic quantum algorithm and its application to combinatorial optimization problem}. In \bibinfo{booktitle}{\emph{Proceedings of the 2000 congress on evolutionary computation. CEC00 (Cat. No. 00TH8512)}}, Vol.~\bibinfo{volume}{2}. \bibinfo{pages}{1354--1360}.
\newblock


\bibitem[Hirche et~al\mbox{.}(2023)]%
        {hirche2023quantum}
\bibfield{author}{\bibinfo{person}{Christoph Hirche}, \bibinfo{person}{Cambyse Rouz{\'e}}, {and} \bibinfo{person}{Daniel~Stilck Fran{\c{c}}a}.} \bibinfo{year}{2023}\natexlab{}.
\newblock \showarticletitle{Quantum differential privacy: An information theory perspective}.
\newblock \bibinfo{journal}{\emph{IEEE Transactions on Information Theory}} \bibinfo{volume}{69}, \bibinfo{number}{9} (\bibinfo{date}{September} \bibinfo{year}{2023}), \bibinfo{pages}{5771--5787}.
\newblock


\bibitem[Jozsa(1994)]%
        {jozsa1994fidelity}
\bibfield{author}{\bibinfo{person}{Richard Jozsa}.} \bibinfo{year}{1994}\natexlab{}.
\newblock \showarticletitle{Fidelity for mixed quantum states}.
\newblock \bibinfo{journal}{\emph{Journal of modern optics}} \bibinfo{volume}{41}, \bibinfo{number}{12} (\bibinfo{date}{December} \bibinfo{year}{1994}), \bibinfo{pages}{2315--2323}.
\newblock


\bibitem[Ju et~al\mbox{.}(2024a)]%
        {ju2024harnessing}
\bibfield{author}{\bibinfo{person}{Keyi Ju}, \bibinfo{person}{Xiaoqi Qin}, \bibinfo{person}{Hui Zhong}, \bibinfo{person}{Xinyue Zhang}, \bibinfo{person}{Miao Pan}, {and} \bibinfo{person}{Baoling Liu}.} \bibinfo{year}{2024}\natexlab{a}.
\newblock \showarticletitle{Harnessing inherent noises for privacy preservation in quantum machine learning}. In \bibinfo{booktitle}{\emph{ICC 2024-IEEE International Conference on Communications}}. \bibinfo{pages}{1121--1126}.
\newblock


\bibitem[Ju et~al\mbox{.}(2024b)]%
        {ju2024privacy}
\bibfield{author}{\bibinfo{person}{Keyi Ju}, \bibinfo{person}{Xiaoqi Qin}, \bibinfo{person}{Hui Zhong}, \bibinfo{person}{Xinyue Zhang}, \bibinfo{person}{Miao Pan}, {and} \bibinfo{person}{Baoling Liu}.} \bibinfo{year}{2024}\natexlab{b}.
\newblock \showarticletitle{Privacy Preserving Quantum Search Mechanism Using Grover's Algorithm}. In \bibinfo{booktitle}{\emph{2024 International Conference on Quantum Communications, Networking, and Computing (QCNC)}}. \bibinfo{pages}{204--210}.
\newblock


\bibitem[Ju et~al\mbox{.}(2024c)]%
        {ju2024quantum}
\bibfield{author}{\bibinfo{person}{Keyi Ju}, \bibinfo{person}{Hui Zhong}, \bibinfo{person}{Xinyue Zhang}, \bibinfo{person}{Xiaoqi Qin}, {and} \bibinfo{person}{Miao Pan}.} \bibinfo{year}{2024}\natexlab{c}.
\newblock \showarticletitle{Quantum Computing Catalyzes Future Quantum Networks: Efficiency and Inherent Privacy}.
\newblock \bibinfo{journal}{\emph{IEEE Network}} (\bibinfo{date}{May} \bibinfo{year}{2024}).
\newblock


\bibitem[Li et~al\mbox{.}(2024)]%
        {li2024differential}
\bibfield{author}{\bibinfo{person}{Yuqing Li}, \bibinfo{person}{Yusheng Zhao}, \bibinfo{person}{Xinyue Zhang}, \bibinfo{person}{Hui Zhong}, \bibinfo{person}{Miao Pan}, {and} \bibinfo{person}{Chi Zhang}.} \bibinfo{year}{2024}\natexlab{}.
\newblock \showarticletitle{Differential privacy preserving quantum computing via projection operator measurements}. In \bibinfo{booktitle}{\emph{2024 International Conference on Quantum Communications, Networking, and Computing (QCNC)}}. \bibinfo{pages}{236--243}.
\newblock


\bibitem[Liang et~al\mbox{.}(2019)]%
        {liang2019quantum}
\bibfield{author}{\bibinfo{person}{Yeong-Cherng Liang}, \bibinfo{person}{Yu-Hao Yeh}, \bibinfo{person}{Paulo~EMF Mendon{\c{c}}a}, \bibinfo{person}{Run~Yan Teh}, \bibinfo{person}{Margaret~D Reid}, {and} \bibinfo{person}{Peter~D Drummond}.} \bibinfo{year}{2019}\natexlab{}.
\newblock \showarticletitle{Quantum fidelity measures for mixed states}.
\newblock \bibinfo{journal}{\emph{Reports on Progress in Physics}} \bibinfo{volume}{82}, \bibinfo{number}{7} (\bibinfo{date}{June} \bibinfo{year}{2019}), \bibinfo{pages}{076001}.
\newblock


\bibitem[Mironov(2017)]%
        {mironov2017renyi}
\bibfield{author}{\bibinfo{person}{Ilya Mironov}.} \bibinfo{year}{2017}\natexlab{}.
\newblock \showarticletitle{R{\'e}nyi differential privacy}. In \bibinfo{booktitle}{\emph{2017 IEEE 30th computer security foundations symposium (CSF)}}. \bibinfo{pages}{263--275}.
\newblock


\bibitem[Murtagh and Vadhan(2015)]%
        {murtagh2015complexity}
\bibfield{author}{\bibinfo{person}{Jack Murtagh} {and} \bibinfo{person}{Salil Vadhan}.} \bibinfo{year}{2015}\natexlab{}.
\newblock \showarticletitle{The complexity of computing the optimal composition of differential privacy}. In \bibinfo{booktitle}{\emph{Theory of Cryptography Conference}}. \bibinfo{pages}{157--175}.
\newblock


\bibitem[Pirandola et~al\mbox{.}(2020)]%
        {pirandola2020advances}
\bibfield{author}{\bibinfo{person}{Stefano Pirandola}, \bibinfo{person}{Ulrik~L Andersen}, \bibinfo{person}{Leonardo Banchi}, \bibinfo{person}{Mario Berta}, \bibinfo{person}{Darius Bunandar}, \bibinfo{person}{Roger Colbeck}, \bibinfo{person}{Dirk Englund}, \bibinfo{person}{Tobias Gehring}, \bibinfo{person}{Cosmo Lupo}, \bibinfo{person}{Carlo Ottaviani}, {et~al\mbox{.}}} \bibinfo{year}{2020}\natexlab{}.
\newblock \showarticletitle{Advances in quantum cryptography}.
\newblock \bibinfo{journal}{\emph{Advances in optics and photonics}} \bibinfo{volume}{12}, \bibinfo{number}{4} (\bibinfo{year}{2020}), \bibinfo{pages}{1012--1236}.
\newblock


\bibitem[Qiao et~al\mbox{.}(2019)]%
        {qiao2019effective}
\bibfield{author}{\bibinfo{person}{Yi Qiao}, \bibinfo{person}{Zhaobin Liu}, \bibinfo{person}{Haoze Lv}, \bibinfo{person}{Minghui Li}, \bibinfo{person}{Zhiyi Huang}, \bibinfo{person}{Zhiyang Li}, {and} \bibinfo{person}{Weijiang Liu}.} \bibinfo{year}{2019}\natexlab{}.
\newblock \showarticletitle{An effective data privacy protection algorithm based on differential privacy in edge computing}.
\newblock \bibinfo{journal}{\emph{IEEE Access}}  \bibinfo{volume}{7} (\bibinfo{date}{September} \bibinfo{year}{2019}), \bibinfo{pages}{136203--136213}.
\newblock


\bibitem[Sharma and Kumar(2019)]%
        {sharma2019survey}
\bibfield{author}{\bibinfo{person}{Abhishek Sharma} {and} \bibinfo{person}{Amit Kumar}.} \bibinfo{year}{2019}\natexlab{}.
\newblock \showarticletitle{A survey on quantum key distribution}. In \bibinfo{booktitle}{\emph{2019 International Conference on Issues and Challenges in Intelligent Computing Techniques (ICICT)}}, Vol.~\bibinfo{volume}{1}. \bibinfo{pages}{1--4}.
\newblock


\bibitem[Steinke(2024)]%
        {DPorg-pdp-to-zcdp}
\bibfield{author}{\bibinfo{person}{Thomas Steinke}.} \bibinfo{year}{2024}\natexlab{}.
\newblock \bibinfo{title}{Tight RDP \& zCDP Bounds from Pure DP}.
\newblock \bibinfo{howpublished}{DifferentialPrivacy.org}.
\newblock
\newblock
\shownote{\url{https://differentialprivacy.org/pdp-to-zcdp/}}.


\bibitem[Tannu and Qureshi(2019)]%
        {tannu2019not}
\bibfield{author}{\bibinfo{person}{Swamit~S Tannu} {and} \bibinfo{person}{Moinuddin~K Qureshi}.} \bibinfo{year}{2019}\natexlab{}.
\newblock \showarticletitle{Not all qubits are created equal: A case for variability-aware policies for NISQ-era quantum computers}. In \bibinfo{booktitle}{\emph{Proceedings of the twenty-fourth international conference on architectural support for programming languages and operating systems}}. \bibinfo{pages}{987--999}.
\newblock


\bibitem[Temme et~al\mbox{.}(2017)]%
        {temme2017error}
\bibfield{author}{\bibinfo{person}{Kristan Temme}, \bibinfo{person}{Sergey Bravyi}, {and} \bibinfo{person}{Jay~M Gambetta}.} \bibinfo{year}{2017}\natexlab{}.
\newblock \showarticletitle{Error mitigation for short-depth quantum circuits}.
\newblock \bibinfo{journal}{\emph{Physical review letters}} \bibinfo{volume}{119}, \bibinfo{number}{18} (\bibinfo{date}{November} \bibinfo{year}{2017}), \bibinfo{pages}{180509}.
\newblock


\bibitem[Tomamichel(2015)]%
        {tomamichel2015quantum}
\bibfield{author}{\bibinfo{person}{Marco Tomamichel}.} \bibinfo{year}{2015}\natexlab{}.
\newblock \bibinfo{booktitle}{\emph{Quantum information processing with finite resources: mathematical foundations}}. Vol.~\bibinfo{volume}{5}.
\newblock \bibinfo{publisher}{Springer}.
\newblock


\bibitem[Uhlmann(1976)]%
        {uhlmann1976transition}
\bibfield{author}{\bibinfo{person}{Armin Uhlmann}.} \bibinfo{year}{1976}\natexlab{}.
\newblock \showarticletitle{The “transition probability” in the state space of a $C^*$-algebra}.
\newblock \bibinfo{journal}{\emph{Reports on Mathematical Physics}} \bibinfo{volume}{9}, \bibinfo{number}{2} (\bibinfo{date}{April} \bibinfo{year}{1976}), \bibinfo{pages}{273--279}.
\newblock


\bibitem[Watkins et~al\mbox{.}(2023)]%
        {watkins2023quantum}
\bibfield{author}{\bibinfo{person}{William~M Watkins}, \bibinfo{person}{Samuel Yen-Chi Chen}, {and} \bibinfo{person}{Shinjae Yoo}.} \bibinfo{year}{2023}\natexlab{}.
\newblock \showarticletitle{Quantum machine learning with differential privacy}.
\newblock \bibinfo{journal}{\emph{Scientific Reports}} \bibinfo{volume}{13}, \bibinfo{number}{1} (\bibinfo{date}{February} \bibinfo{year}{2023}), \bibinfo{pages}{2453}.
\newblock


\bibitem[Zhao et~al\mbox{.}(2024)]%
        {zhao2024bridging}
\bibfield{author}{\bibinfo{person}{Yusheng Zhao}, \bibinfo{person}{Hui Zhong}, \bibinfo{person}{Xinyue Zhang}, \bibinfo{person}{Chi Zhang}, {and} \bibinfo{person}{Miao Pan}.} \bibinfo{year}{2024}\natexlab{}.
\newblock \showarticletitle{Bridging Quantum Computing and Differential Privacy: A Survey on Quantum Computing Privacy}.
\newblock \bibinfo{journal}{\emph{arXiv preprint arXiv:2403.09173}} (\bibinfo{date}{March} \bibinfo{year}{2024}).
\newblock


\bibitem[Zhong et~al\mbox{.}(2023)]%
        {zhong2023tuning}
\bibfield{author}{\bibinfo{person}{Hui Zhong}, \bibinfo{person}{Keyi Ju}, \bibinfo{person}{Manojna Sistla}, \bibinfo{person}{Xinyue Zhang}, \bibinfo{person}{Xiaoqi Qin}, \bibinfo{person}{Xin Fu}, {and} \bibinfo{person}{Miao Pan}.} \bibinfo{year}{2023}\natexlab{}.
\newblock \showarticletitle{Tuning quantum computing privacy through quantum error correction}.
\newblock \bibinfo{journal}{\emph{arXiv preprint arXiv:2312.14521}} (\bibinfo{date}{December} \bibinfo{year}{2023}).
\newblock


\bibitem[Zhou and Ying(2017)]%
        {zhou2017differential}
\bibfield{author}{\bibinfo{person}{Li Zhou} {and} \bibinfo{person}{Mingsheng Ying}.} \bibinfo{year}{2017}\natexlab{}.
\newblock \showarticletitle{Differential privacy in quantum computation}. In \bibinfo{booktitle}{\emph{2017 IEEE 30th Computer Security Foundations Symposium (CSF)}}. \bibinfo{pages}{249--262}.
\newblock


\end{thebibliography}
\newpage
\appendix
\section{Appendix: Theorem 5}
We provide a more detailed explanation of Theorem~\ref{fidelitytrends}, outlining the fidelity trends under three different noise mechanisms.

\begin{theorem}[\bf{Fidelity Trends under Generalized Amplitude Damping}\label{fidelitytrendsa}]
Let $\rho' = \begin{bmatrix} a & b \\ b^* & c \end{bmatrix}$ be a $2\times 2$ quantum density matrix and let $\rho''= \begin{bmatrix} 
a + \frac{1}{2}(c-a)\gamma & b\sqrt{1-\gamma} \\
b^*\sqrt{1-\gamma} & c + \frac{1}{2}(a-c)\gamma 
\end{bmatrix}$ be the density matrix parameterized by $\gamma$. $a+c=1$ and $b \in \mathbb{C}$. The fidelity $\mathcal{F}(\rho', \rho'')$ between $\rho'$ and $\rho''$ is a decreasing function of $\gamma$, i.e.,
\begin{equation}
    \frac{dF(\rho', \rho'')}{d\gamma} \leq 0 \quad \forall \gamma \in [0, 1].
\end{equation}
\end{theorem}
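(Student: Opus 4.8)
The plan is to reduce the fidelity formula to an explicit closed form in the parameter $\gamma$, and then to differentiate. For a pair of $2\times 2$ density matrices, Schumacher's fidelity admits the well-known simplification
\[
\mathcal{F}(\rho',\rho'') = \mathrm{Tr}(\rho'\rho'') + 2\sqrt{\det(\rho')\det(\rho'')},
\]
so the first step is to invoke this identity (or rederive it from the eigenvalue structure of $\sqrt{\rho'}\rho''\sqrt{\rho'}$) and substitute the explicit entries of $\rho'$ and $\rho''$. Writing $\rho' = \begin{bmatrix} a & b \\ b^* & c\end{bmatrix}$ and $\rho''$ with diagonal entries $a+\tfrac12(c-a)\gamma$, $c+\tfrac12(a-c)\gamma$ and off-diagonal $b\sqrt{1-\gamma}$, I would compute $\mathrm{Tr}(\rho'\rho'')$, which is a quadratic polynomial in $\gamma$ plus a term $2|b|^2\sqrt{1-\gamma}$, and $\det(\rho'') = \big(a+\tfrac12(c-a)\gamma\big)\big(c+\tfrac12(a-c)\gamma\big) - |b|^2(1-\gamma)$.

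Second, I would assemble $\mathcal{F}(\gamma)$ and analyze monotonicity by splitting it into the two summands. The term $\mathrm{Tr}(\rho'\rho'')$ should be shown to be non-increasing in $\gamma$: its $\gamma$-derivative is $-\tfrac12(c-a)^2$ from the diagonal part (always $\le 0$) minus $\frac{|b|^2}{\sqrt{1-\gamma}}$ (always $\le 0$), so this summand is manifestly decreasing. For the $\sqrt{\det(\rho')\det(\rho'')}$ term, since $\det(\rho')$ is a fixed nonnegative constant it suffices to show $\det(\rho'')$ is non-increasing in $\gamma$; its derivative works out to $\tfrac12(a-c)\big(c-a\big) + |b|^2 = -\tfrac12(c-a)^2 + |b|^2$, so this part is \emph{not} automatically negative, and that is where the real work lies.

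The main obstacle is therefore handling the determinant term when $|b|^2$ is large relative to $(c-a)^2$: in that regime $\det(\rho'')$ can momentarily increase, and I must show the decrease of $\mathrm{Tr}(\rho'\rho'')$ still dominates. I would address this by using the density-matrix positivity constraint $|b|^2 \le ac \le \tfrac14$, which bounds how large the $\sqrt{\det}$ contribution and its derivative can be, and by comparing the two derivative terms directly: the quantity to sign-control is $-\tfrac12(c-a)^2 - \frac{|b|^2}{\sqrt{1-\gamma}} + \frac{\sqrt{\det(\rho')}}{\sqrt{\det(\rho'')}}\big(-\tfrac12(c-a)^2 + |b|^2\big)$. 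Using $\det(\rho')\le \det(\rho'')$-type estimates near $\gamma = 0$ is false in general, so instead I would bound $\sqrt{\det(\rho')/\det(\rho'')}$ crudely and show the negative terms win; alternatively, one can complete the square and verify $\frac{d\mathcal{F}}{d\gamma}\le 0$ pointwise after clearing the square roots by a careful squaring argument (tracking that both sides have the correct sign before squaring). If a fully general elementary bound proves elusive, the fallback is to observe that fidelity is monotone under the quantum channel $\mathcal{E}_{\text{GAD}}$ in the sense that composing with a channel of larger $\gamma$ factors through one of smaller $\gamma$ (the GAD channels with $p=\tfrac12$ form a one-parameter semigroup-like family under the reparametrization $1-\gamma \mapsto (1-\gamma_1)(1-\gamma_2)$), and data processing for fidelity then gives monotonicity directly without any calculus. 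I would present the semigroup/data-processing argument as the clean proof and relegate the explicit derivative computation to a verification.
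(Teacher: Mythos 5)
Your reduction to the closed form $\mathcal{F}(\rho',\rho'') = \mathrm{Tr}(\rho'\rho'') + 2\sqrt{\det(\rho')\det(\rho'')}$ is valid for qubits and is equivalent to the Bloch-vector formula the paper uses, and your treatment of the $\mathrm{Tr}(\rho'\rho'')$ summand is correct. The gap is in the determinant term, and it is twofold. First, your derivative of $\det(\rho'')$ has a sign error: writing $a-c=2m$, the diagonal entries of $\rho''$ are $\tfrac12\pm m(1-\gamma)$, so $\det(\rho'')=\tfrac14-m^2(1-\gamma)^2-|b|^2(1-\gamma)$ and $\frac{d}{d\gamma}\det(\rho'')=\tfrac12(a-c)^2(1-\gamma)+|b|^2\ge 0$, not $-\tfrac12(c-a)^2+|b|^2$. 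Hence $\det(\rho'')$ is \emph{always} non-decreasing and the $\sqrt{\det}$ summand always pushes against you; the ``real work'' is needed in every regime, not only when $|b|^2$ is large. Second, and more importantly, the estimate you dismiss as ``false in general,'' namely $\det(\rho')\le\det(\rho'')$, is in fact true for all $\gamma\in[0,1]$ (it follows from the monotonicity just computed together with $\det(\rho'')|_{\gamma=0}=\det(\rho')$), and it is exactly what closes the argument: it gives $\sqrt{\det(\rho')/\det(\rho'')}\le 1$, whence
\begin{equation*}
\frac{dF}{d\gamma} \;\le\; -\frac{(a-c)^2}{2}-\frac{|b|^2}{\sqrt{1-\gamma}}+\frac{(a-c)^2}{2}(1-\gamma)+|b|^2 \;\le\; 0 .
\end{equation*}
So the elementary route you start down does work, but only after correcting the sign and retaining the bound you discarded.

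The fallback you designate as the ``clean proof'' does not work. The GAD maps with $p=\tfrac12$ do compose as $1-\gamma_3=(1-\gamma_1)(1-\gamma_2)$, but data processing applied to $\mathcal{E}_{\gamma_2}$ yields $F(\mathcal{E}_{\gamma_2}(\rho'),\mathcal{E}_{\gamma_3}(\rho'))\ge F(\rho',\mathcal{E}_{\gamma_1}(\rho'))$, which compares the wrong pair of states and points in the wrong direction; it does not imply $F(\rho',\mathcal{E}_{\gamma_3}(\rho'))\le F(\rho',\mathcal{E}_{\gamma_1}(\rho'))$. Monotonicity of $F(\rho,\mathcal{E}_t(\rho))$ in $t$ is not a consequence of the semigroup property plus data processing---a one-parameter unitary family is a counterexample---and the standard argument of this type only applies when $\rho$ is a fixed point of the semigroup, which $\rho'$ generally is not here (the fixed point is $I/2$). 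For comparison, the paper's own proof uses the equivalent Bloch-vector form, substitutes $t=\sqrt{1-\gamma}$, and then asserts that the negative cubic term ``grows faster'' than the positive terms, which is likewise not a pointwise sign argument; your corrected derivative computation with the $\det(\rho')\le\det(\rho'')$ bound would actually be more rigorous than what the paper provides.
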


\begin{theorem}[\bf{Fidelity Trends under Composition of Phase and Amplitude Damping}\label{fidelitytrendsb}]
Let $\rho' = \begin{bmatrix} a & b \\ b^* & c \end{bmatrix}$ be a $2\times 2$ quantum density matrix and let $\rho''= \begin{bmatrix} 
a + \frac{1}{2}(c-a)\gamma & b\sqrt{1-\gamma}\sqrt{1-\lambda} \\
b^*\sqrt{1-\gamma}\sqrt{1-\lambda} & c + \frac{1}{2}(a-c)\gamma 
\end{bmatrix}$ be the density matrix parameterized by $\gamma$ or $\lambda$. $a+c=1$ and $b \in \mathbb{C}$. The fidelity $\mathcal{F}(\rho', \rho'')$ between $\rho'$ and $\rho''$ is a decreasing function of $\gamma$ or $\lambda$, i.e.,
\begin{equation}
    \frac{dF(\rho', \rho'')}{d\gamma} \leq 0 , \quad \frac{dF(\rho', \rho'')}{d\lambda} \leq 0  \quad \forall \gamma,\lambda \in [0, 1].
\end{equation}
\end{theorem}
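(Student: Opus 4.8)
The natural route is to reduce the statement to an elementary one-variable calculus exercise via the closed form of the fidelity for a pair of qubit states, $\mathcal{F}(\rho,\sigma)=\mathrm{Tr}(\rho\sigma)+2\sqrt{\det\rho\,\det\sigma}$ (this is exactly the squared quantity in Eq.(\ref{fid}) specialised to $2\times2$ matrices). Writing $u:=a+\tfrac12(c-a)\gamma$ and $s:=\sqrt{(1-\gamma)(1-\lambda)}$, the perturbed matrix $\rho''$ has diagonal $(u,1-u)$ and off-diagonal entry $bs$, so a direct substitution gives
\[
\mathcal{F}(\rho',\rho'')=T+2\sqrt{D'D''},\qquad T:=c+(a-c)u+2|b|^{2}s,\quad D':=ac-|b|^{2},\quad D'':=u(1-u)-|b|^{2}s^{2},
\]
where $D'=\det\rho'\ge0$ since $\rho'$ is positive semidefinite, and $D''=\det\rho''$. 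The whole proof then hinges on two monotone observations that I would isolate as a short lemma: (i) since $t\mapsto t(1-t)$ is concave with maximum at $t=\tfrac12$ and $u$ interpolates linearly between $a$ (at $\gamma=0$) and $\tfrac{a+c}{2}=\tfrac12$ (at $\gamma=1$), we have $u(1-u)\ge a(1-a)=ac$ for every $\gamma\in[0,1]$; and (ii) combining (i) with $s\le1$ gives $D''\ge ac-|b|^{2}s^{2}\ge ac-|b|^{2}=D'\ge0$, hence $\sqrt{D'/D''}\le1$ whenever $D''>0$. It then remains to check $\partial_\lambda\mathcal{F}\le0$ and $\partial_\gamma\mathcal{F}\le0$ on $[0,1]^{2}$.

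The $\lambda$-derivative closes immediately: only $s$ depends on $\lambda$, with $\partial_\lambda s\le0$, and differentiating and factoring yields $\partial_\lambda\mathcal{F}=2|b|^{2}(\partial_\lambda s)\bigl(1-s\sqrt{D'/D''}\bigr)$, which is $\le0$ because the bracket is nonnegative by $s\le1$ and fact (ii). (If $b=0$ there is nothing to prove in this direction.)

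The $\gamma$-derivative is the genuine obstacle. The overlap term is manifestly decreasing, $\partial_\gamma T=-\tfrac12(c-a)^{2}+2|b|^{2}\partial_\gamma s\le0$, but the determinant product \emph{increases}, $\partial_\gamma D''=\tfrac12(c-a)^{2}(1-\gamma)+2|b|^{2}s|\partial_\gamma s|\ge0$ (using the identity $1-2u=(c-a)(1-\gamma)$), so the term $2\sqrt{D'D''}$ pulls $\mathcal{F}$ upward and one must show the decrease of $T$ dominates. Writing $\partial_\gamma\mathcal{F}=\partial_\gamma T+\sqrt{D'}\,\partial_\gamma D''/\sqrt{D''}$, the target inequality becomes $\sqrt{D'}\,\partial_\gamma D''\le|\partial_\gamma T|\,\sqrt{D''}$; but term by term $\partial_\gamma D''=\tfrac12(c-a)^{2}(1-\gamma)+2|b|^{2}s|\partial_\gamma s|\le\tfrac12(c-a)^{2}+2|b|^{2}|\partial_\gamma s|=|\partial_\gamma T|$ (using $1-\gamma\le1$ and $s\le1$), and $\sqrt{D'}\le\sqrt{D''}$ by fact (ii), so multiplying these two nonnegative inequalities yields exactly what is required. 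The degenerate cases — $\rho'$ pure ($D'=0$, where $\mathcal{F}=T$ is differentiable and already nonincreasing) and parameter values where $D''=0$ — are handled by continuity, since $D''\ge D'\ge0$ throughout and one-sided monotonicity passes to the closure. I expect the only place care is actually needed to be the $\gamma$-derivative bookkeeping: spotting the identity $1-2u=(c-a)(1-\gamma)$, which is precisely what makes $\partial_\gamma D''$ and $|\partial_\gamma T|$ comparable term by term.
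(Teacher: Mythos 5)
Your proposal is correct, and it takes a genuinely different — and more rigorous — route than the paper's own argument. The paper works in Bloch-vector coordinates, substitutes $t=\sqrt{1-\gamma}$, and then argues informally that the negative part of $dF/dt$ ``grows faster'' than the positive part; that domination claim is never actually verified, the premises as stated ($dF/dt<0$ and $dt/d\gamma<0$) would if taken literally give $dF/d\gamma>0$ rather than the claimed sign, and the PAD and depolarizing cases are dispatched with ``the proof is similar.'' You instead use the equivalent determinant form $\mathcal{F}(\rho',\rho'')=\mathrm{Tr}(\rho'\rho'')+2\sqrt{\det\rho'\det\rho''}$ (H\"ubner's qubit fidelity formula, which is the same quantity as the paper's Bloch-vector expression) and isolate two clean facts: $\det\rho''\ge\det\rho'\ge 0$, via concavity of $t\mapsto t(1-t)$ together with $s\le 1$, and the term-by-term bound $\partial_\gamma\det\rho''\le|\partial_\gamma T|$, via the identity $1-2u=(c-a)(1-\gamma)$. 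Multiplying these two nonnegative inequalities closes the $\gamma$-direction, the $\lambda$-direction factors as $2|b|^2(\partial_\lambda s)\bigl(1-s\sqrt{D'/D''}\bigr)\le 0$ directly, and the degenerate case is genuinely trivial in your setup because $D''\ge D'$ means the square root can only vanish when $\det\rho'=0$, in which case $\mathcal{F}=T$ identically. The net gain over the paper is an airtight monotonicity argument that covers both noise parameters of the PAD channel at once; the only thing to add in a final write-up is an explicit statement (or citation) of the qubit fidelity identity you start from.
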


\begin{theorem}[\bf{Fidelity Trends under Depolarizing Mechanism}\label{fidelitytrendsc}]
Let $\rho' = \begin{bmatrix} a & b \\ b^* & c \end{bmatrix}$ be a $2\times 2$ quantum density matrix and let $\rho''= \begin{bmatrix} 
\frac{pI}{D} + (1 - p)a & b(1-p) \\
b^*(1-p) & \frac{pI}{D} + (1 - p)c 
\end{bmatrix}$ be the density matrix parameterized by $p$. $a+c=1$ and $b \in \mathbb{C}$. The fidelity $\mathcal{F}(\rho', \rho'')$ between $\rho'$ and $\rho''$ is a decreasing function of $p$, i.e.,
\begin{equation}
    \frac{dF(\rho', \rho'')}{dp} \leq 0 \quad \forall p \in [0, 1].
\end{equation}
\end{theorem}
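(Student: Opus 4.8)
The plan is to combine the closed-form single-qubit fidelity formula with the Bloch-vector parametrization. For $2\times 2$ density matrices one has $\mathcal{F}(\rho,\sigma)=\mathrm{Tr}(\rho\sigma)+2\sqrt{\det\rho\,\det\sigma}$; writing $\rho'=\tfrac12(I+\vec r\cdot\vec\sigma)$ with $\vec r=(r_x,r_y,r_z)$ and $\rho''$ with Bloch vector $\vec r\,''$, this becomes
\[
\mathcal{F}(\rho',\rho'')=\tfrac12\Big[\,1+\vec r\cdot\vec r\,''+\sqrt{(1-|\vec r|^2)(1-|\vec r\,''|^2)}\,\Big].
\]
The positivity constraints $a,c\ge 0$, $|b|^2\le ac$ translate into $|\vec r|\le 1$, which keeps both radicals real; and since each noise channel below only shrinks Bloch-vector components, $|\vec r\,''|\le|\vec r|\le 1$ as well.

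First I would record the action of the three channels on $\vec r$, read off directly from the explicit $\rho''$ matrices in Sections~4.1--4.3: the depolarizing channel sends $\vec r\mapsto(1-p)\vec r$; the generalized amplitude damping channel with $p=\tfrac12$ sends $(r_x,r_y,r_z)\mapsto(\sqrt{1-\gamma}\,r_x,\sqrt{1-\gamma}\,r_y,(1-\gamma)r_z)$; and the phase-plus-amplitude composition additionally rescales $r_x,r_y$ by $\sqrt{1-\lambda}$. Introducing $x=1-p$, $\eta=\sqrt{1-\gamma}$, $\mu=\sqrt{1-\lambda}$, each a strictly decreasing function of the corresponding noise parameter valued in $[0,1]$, it suffices to show that $\mathcal F(\rho',\rho'')$ is non-decreasing in each of $x$, $\eta$, $\mu$.

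Then I would carry out the one-variable differentiations. With $P=r_x^2+r_y^2$, $Z=r_z^2$ and $R=P+Z\le 1$, each case writes $2\mathcal F$ as $1+L+\sqrt{(1-R)(1-Q)}$ with $L$ linear and $Q$ quadratic in the contraction parameters, so the derivative always has the shape ``prefactor minus $\sqrt{1-R}\cdot(\text{term})/\sqrt{1-Q}$''. For the depolarizing mechanism this collapses cleanly: $\mathrm{d}(2\mathcal F)/\mathrm{d}x=R\big(1-x\sqrt{1-R}/\sqrt{1-x^2R}\big)\ge 0$, since $x^2(1-R)\le 1-x^2R$ is just $x^2\le 1$. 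For generalized amplitude damping, clearing denominators reduces the claim to $(P+2\eta Z)\sqrt{1-\eta^2P-\eta^4Z}\ge\sqrt{1-R}\,(\eta P+2\eta^3Z)$, which holds factor-by-factor because $\eta\le 1$ makes the prefactor dominate and $\eta^2P+\eta^4Z\le R$ makes the first radical dominate $\sqrt{1-R}$. The phase-plus-amplitude case is the same computation with $\mu$ inserted: the $\mu$-derivative reduces to $\nu^2(1-Z)+\eta^4Z\le 1$ with $\nu=\eta\mu$, which follows from $\nu^2\le1$, $\eta^4\le1$, $0\le Z\le1$, and the $\gamma$-derivative is handled exactly as in the generalized amplitude damping case.

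The step I expect to be the main obstacle is the phase-plus-amplitude case: amplitude damping contracts $r_z$ by $\eta^2$ but $r_x,r_y$ by $\eta$, while phase damping touches only $r_x,r_y$, so $\rho''$ does not move along a single convex path toward $I/2$ and one cannot simply invoke joint concavity of the fidelity; $P$ and $Z$ must be tracked separately, and the sign of $\eta^2-\mu^2$ is not fixed. The resolution is that the factor-by-factor domination argument is insensitive to that sign, provided one remembers to use $0\le Z\le R\le 1$ (a consequence of positivity) at the right place. If the bookkeeping becomes unwieldy, an alternative is to realize each channel as a composition of commuting partial contractions toward the maximally mixed state along a coordinate subspace and appeal to the data-processing monotonicity of fidelity, but the explicit derivative route is self-contained and directly yields the stated $\mathrm{d}F/\mathrm{d}(\cdot)\le 0$ bounds on $[0,1]$.
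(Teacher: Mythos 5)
Your proposal is correct and follows the same basic route as the paper: reduce to the single-qubit Bloch-vector fidelity formula $F=\tfrac12\bigl[1+\vec r\cdot\vec s+\sqrt{(1-|\vec r|^2)(1-|\vec s|^2)}\bigr]$, identify the channel's action as a contraction of the Bloch vector, and differentiate in the contraction parameter. For the depolarizing case your computation is complete and airtight: with $\vec s=(1-p)\vec r$, $x=1-p$, $R=|\vec r|^2$, the derivative $\tfrac{d(2F)}{dx}=R\bigl(1-x\sqrt{1-R}/\sqrt{1-x^2R}\bigr)\ge 0$ reduces to $x^2(1-R)\le 1-x^2R$, i.e.\ $x^2\le 1$, and the chain rule through $x=1-p$ gives $dF/dp\le 0$.

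The difference worth noting is one of rigor rather than method. The paper only writes out the generalized-amplitude-damping case and dismisses the depolarizing theorem with ``the proof is similar''; moreover its template argument is heuristic at the key step (it asserts that a cubic negative term ``grows faster'' than the linear positive terms, and its sign bookkeeping through the substitution $t=\sqrt{1-\gamma}$ is inconsistent with the claimed conclusion). Your depolarizing argument replaces that with an exact algebraic inequality, which is strictly better. Your treatment of the PAD case is also more careful than the paper's, since you track $P=r_x^2+r_y^2$ and $Z=r_z^2$ separately to handle the unequal contraction rates; if you write this up, state explicitly that you use $0\le Z\le R\le 1$ and $\eta,\mu\in[0,1]$ when clearing denominators, as that factor-by-factor domination is the only place a reader could stumble.
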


\subsection{Proof of Theorem~\ref{fidelitytrendsa}}
We start by proving Theorem~\ref{fidelitytrendsa}.

\textbf{Step 1: Fidelity Formula for Qubits}
The fidelity of a single qubit system can be expressed by Eq.(\ref{fid}), or the geometric relationship between Bloch vectors.
For two qubit density matrices \( \rho' \) and \( \rho'' \) with Bloch vectors \( \vec{r} \) and \( \vec{s} \), the fidelity \( F(\rho', \rho'') \) is given by~\cite{chen2002bloch}:
\[
F(\rho', \rho'') = \frac{1 + \vec{r} \cdot \vec{s} + \sqrt{(1 - |\vec{r}|^2)(1 - |\vec{s}|^2)}}{2}.
\]

The Bloch vector \( \vec{r} \) for \( \rho' \) is:
\[
\vec{r} = (2b, 0, 2a - 1).
\]

The Bloch vector \( \vec{s} \) for \( \rho'' \) is:

\[
\vec{s} = \left( 2b\sqrt{1 - \gamma}, 0, (2a - 1)(1 - \gamma) \right).
\]

\textbf{Step 2: Express the Fidelity in Terms of \( \gamma \)}

The fidelity \( F(\rho', \rho'') \) depends on \( \gamma \) through the dot product \( \vec{r} \cdot \vec{s} \) and the magnitudes \( |\vec{r}|^2 \) and \( |\vec{s}|^2 \). These terms can be computed as follows.

\begin{enumerate}
    \item \textbf{Dot Product} \( \vec{r} \cdot \vec{s} \):

    \[
    \begin{aligned}
    \vec{r} \cdot \vec{s} &=
    (2b)(2b\sqrt{1 - \gamma}) + (2a - 1)(2a - 1)(1 - \gamma) \\&= 4b^2 \sqrt{1 - \gamma} + (2a - 1)^2(1 - \gamma).
    \end{aligned}
    \]

    \item \textbf{Magnitudes} \( |\vec{r}|^2 \) and \( |\vec{s}|^2 \):

    \[
    |\vec{r}|^2 = 4b^2 + (2a - 1)^2.
    \]
    
    \[
    |\vec{s}|^2 = 4b^2(1 - \gamma) + (2a - 1)^2(1 - \gamma)^2.
    \]
\end{enumerate}

Thus, the fidelity becomes:

\begin{equation*}
\begin{aligned}
F(\gamma) =& \frac{1 + 4b^2\sqrt{1 - \gamma} + (2a - 1)^2(1 - \gamma)}{2} + \\&
\frac{\sqrt{(1 - 4b^2)(1 - (2a - 1)^2)(1 - \gamma)}}{2}.
\end{aligned}
\end{equation*}

Define constants:

\[
k_1 = 4b^2, \quad k_2 = (2a - 1)^2, \quad \delta = 1 - k_1 - k_2.
\]

The fidelity can be rewritten as:

\begin{equation*}
\begin{aligned}
F(\gamma) = \frac{1 + k_1 \sqrt{1 - \gamma} + k_2(1 - \gamma)+\sqrt{\delta \left(1 - k_1(1 - \gamma) - k_2(1 - \gamma)^2 \right)}}{2}.
\end{aligned}
\end{equation*}

\textbf{Step 3: Derivative of Fidelity with Respect to \( \gamma \)}

To prove that \( F(\gamma) \) decreases as \( \gamma \) increases, we compute the derivative of \( F \) with respect to \( \gamma \).

Let \( t = \sqrt{1 - \gamma} \), so \( \gamma = 1 - t^2 \), and express \( F \) in terms of \( t \):

\begin{equation}
F(t) = \frac{1 + k_1t + k_2t^2 + \sqrt{\delta \left( 1 - k_1t^2 - k_2t^4 \right)}}{2}.
\end{equation}

The derivative of \( F(t) \) with respect to \( t \) is:

\begin{equation}
\frac{dF}{dt} = \frac{1}{2} \left( k_1 + 2k_2t - \frac{\delta \left( 2k_1t + 4k_2t^3 \right)}{2 \sqrt{\delta \left( 1 - k_1t^2 - k_2t^4 \right)}} \right).
\end{equation}

The first part of the derivative consists of positive terms \( k_1 \) and \( 2k_2t \), both of which increase linearly with \( t \). The second part of the derivative is a negative term, which involves both linear and cubic terms in \( t \), i.e.,

\[
- \frac{\delta \left( 2k_1t + 4k_2t^3 \right)}{2 \sqrt{\delta \left( 1 - k_1t^2 - k_2t^4 \right)}}.
\]
As \( t \) increases, this negative term grows faster than the positive terms due to the cubic term \( t^3 \).

\textbf{Step 4: Behavior as \( t \) Increases}

As \( t \) increases (i.e., as \( \gamma \) decreases), the negative term dominates the positive terms, because the cubic term \( t^3 \) grows faster than the linear terms \( t \). Therefore, \( \frac{dF}{dt} \) becomes negative for all \( t > 0 \).

Since \( t = \sqrt{1 - \gamma} \), and \( \frac{dF}{d\gamma} = \frac{dF}{dt} \cdot \frac{dt}{d\gamma} \), we know that 

\[
\frac{dt}{d\gamma} = -\frac{1}{2\sqrt{1 - \gamma}},
\]
which is always negative. Hence, 

\[
\frac{dF}{d\gamma} \leq 0.
\]

\textbf{Conclusion:} Since \( \frac{dF}{d\gamma} \leq 0 \) for all \( \gamma \in [0, 1] \), we conclude that \( F(\gamma) \) is a decreasing function of \( \gamma \). Thus, as \( \gamma \) increases, the fidelity \( F(\rho', \rho'') \) decreases, completing the proof.

The proof process of Theorem~\ref{fidelitytrendsb} and~\ref{fidelitytrendsc} is similar to the above. Finally, we can conclude that the fidelity decreases when the noise parameter increases.







\end{document}